\documentclass[11pt,letter]{article}
\usepackage{authblk}
\usepackage[margin=1in]{geometry}
\usepackage[utf8x]{inputenc}
\usepackage{amsmath, amsthm}
\usepackage{wrapfig,floatflt,graphicx,amssymb,textcomp,array,amsmath}
\usepackage{arcs}
\usepackage{enumerate}
\usepackage{multirow}
\usepackage{tabularx}
\usepackage{color}
\usepackage{todonotes}
\usepackage[titletoc,title]{appendix}

\newcommand{\etal}{{et~al.}}

\setcounter{secnumdepth}{3}

\title{Euclidean Bottleneck Bounded-Degree Spanning Tree Ratios
}

\author{Ahmad Biniaz\thanks{Part of this work has been done while the author was an NSERC postdoctoral fellow at University of Waterloo.}
}

\affil{School of Computer Science\\University of Windsor\\ \texttt{ahmad.biniaz@gmail.com}}

\date{}
\newtheorem{lemma}{Lemma}
\newtheorem{corollary}{Corollary}

\newtheorem{theorem}{Theorem}

\newtheorem*{problem*}{Problem}
\newtheorem*{claim*}{Claim}
\newtheorem*{invariant*}{Invariant}

\begin{document}
	\maketitle
	\begin{abstract}
	Inspired by the seminal works of Khuller~\etal~(STOC 1994) and Chan (SoCG 2003) we study the bottleneck version of the Euclidean bounded-degree spanning tree problem.
	A bottleneck spanning tree is a spanning tree whose largest edge-length is minimum, and a bottleneck degree-$K$ spanning tree is a degree-$K$ spanning tree whose largest edge-length is minimum. Let $\beta_K$ be the  supremum ratio of the largest edge-length of the bottleneck degree-$K$ spanning tree to the largest edge-length of the bottleneck spanning tree, over all finite point sets in the Euclidean plane. It is known that $\beta_5=1$, and it is easy to verify that $\beta_2\geqslant2$, $\beta_3\geqslant\sqrt{2}$, and $\beta_4>1.175$.
	
	It is implied by the Hamiltonicity of the cube of the bottleneck spanning tree that $\beta_2\leqslant 3$. The degree-3 spanning tree algorithm of Ravi~\etal~(STOC 1993) implies that $\beta_3\leqslant 2$.   
	Andersen and Ras (Networks, 68(4):302–314, 2016) showed that $\beta_4\leqslant \sqrt{3}$. We present the following improved bounds: $\beta_2\geqslant\sqrt{7}$, $\beta_3\leqslant \sqrt{3}$, and $\beta_4\leqslant \sqrt{2}$. As a result, we obtain better approximation algorithms for Euclidean bottleneck degree-3 and degree-4 
	spanning trees. As parts of our proofs of these bounds we present some structural properties of the Euclidean minimum spanning tree which are of independent interest.
	\end{abstract}
\section{Introduction}

The problem of computing a spanning tree of a graph that satisfies given constraints has been well studied. For example, the famous minimum spanning tree (MST) problem asks for a spanning tree with minimum total edge-length, and the bottleneck spanning tree (BST) problem asks for a spanning tree whose largest edge-length is minimum.  
In the last decades, a number of works have been devoted to the study of low-degree spanning trees with short edges. These trees not only satisfy interesting theoretical properties, but also have applications in wireless networks because nodes with high degree or high transmission range lead to a higher level of interference. For finite point sets in any metric space, one can construct a degree-2 spanning tree (a spanning path) whose largest edge-length is at most thrice the largest edge-length of the BST. Such a tree always exists in the cube\footnote{The cube of a graph $G$ has the same vertices as $G$, and has an edge between two distinct vertices if and only if there exists a path, with at most three edges, between them in $G$.} of the BST, and can be computed in polynomial time \cite{Karaganis1968,Lesniak1973}. This yields a factor-3 approximation algorithm for the bottleneck traveling salesman path problem. We will show that if we use BST's largest edge-length as the lower bound, then it is impossible to obtain a ratio better than $\sqrt{7}$ even in the Euclidean metric in the plane.

This paper addresses the bottleneck degree-$K$ spanning tree problem which is a generalization of the bottleneck traveling salesman path problem (for which $K$ is $2$): given a finite point set in the Euclidean plane and an integer $K\geqslant 2$, find a spanning tree of maximum degree at most $K$ that minimizes the largest edge-length, i.e., the length of the longest edge. The degree constraint is natural to consider as high-degree nodes in wireless networks lead to a higher level of interference. 
The edge-length constraint is also natural to consider, since nodes with high transmission range require higher transmission power which in turn increases the interference.

For $K\geqslant 2$, let $\beta_K$ be the supremum ratio of the largest edge-length of the bottleneck degree-$K$ spanning tree (degree-$K$ BST) to the
largest edge-length of the BST, over all finite point sets in the Euclidean plane. Based on the above discussion, $\beta_2\leqslant 3$. 
The definition of $\beta_K$ is consistent with Chan's definition \cite{Chan2004} of $\tau_K$ as the supremum ratio of the weight of the minimum degree-$K$ spanning tree (degree-$K$ MST) to the weight of the MST, over all finite point sets in the Euclidean plane. 
Since every point set has an MST of degree at most 5 \cite{Monma1992}, we get $\tau_K=1$ for all $K\geqslant 5$. Moreover, since every MST is a BST \cite{Camerini1978}, we get $\beta_K=\tau_K$ for all $K\geqslant 5$. For every $K\in\{2,3,4\}$ there are point sets for which no degree-$K$ MST is a degree-$K$ BST, e.g., for the point set in Figure~\ref{MST-BST-fig} any degree-$K$ MST takes the edge of length $1.237$ while no degree-$K$ BST takes that edge. 

\begin{figure}[htb]
	\centering
	\setlength{\tabcolsep}{0in}
	$\begin{tabular}{cccc}
	\multicolumn{1}{m{.25\columnwidth}}{\centering\includegraphics[width=.23\columnwidth]{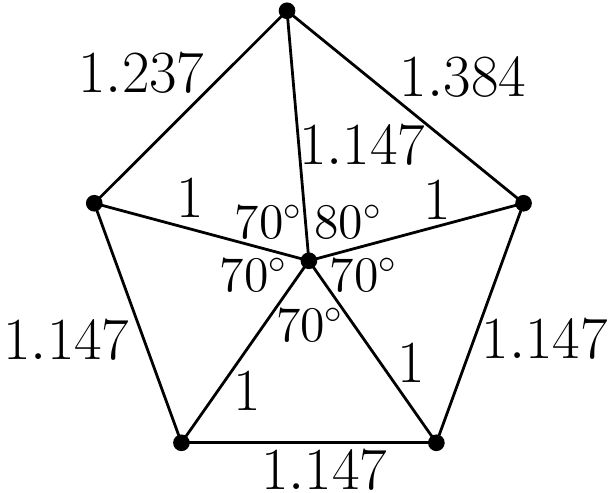}}
	&\multicolumn{1}{m{.25\columnwidth}}{\centering\includegraphics[width=.17\columnwidth]{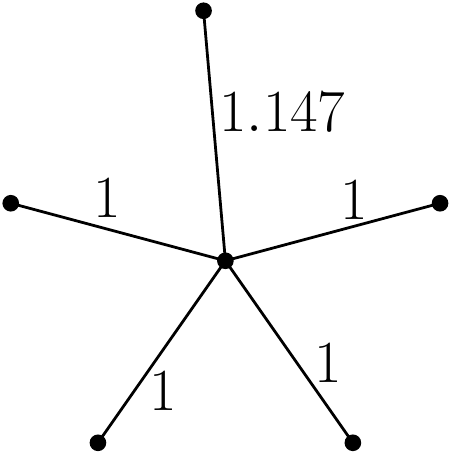}}
	&\multicolumn{1}{m{.25\columnwidth}}{\centering\includegraphics[width=.17\columnwidth]{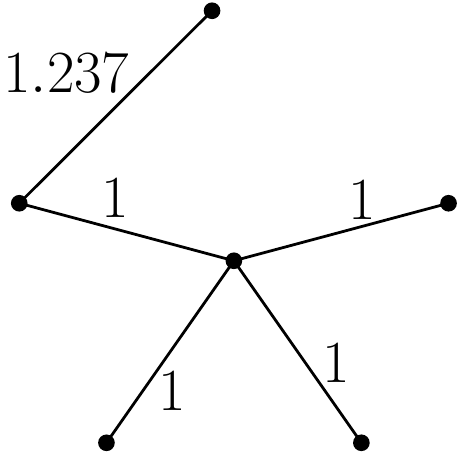}}
	&\multicolumn{1}{m{.25\columnwidth}}{\centering\includegraphics[width=.19\columnwidth]{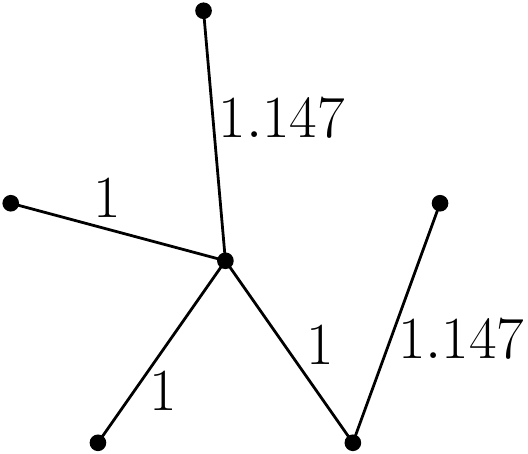}}\\
	input point set&MST and BST&degree-4 MST&degree-4 BST
	\end{tabular}$
	\caption{A point set for which no degree-$K$ MST is a degree-$K$ BST for any $K\in\{2,3,4\}$.}
	\label{MST-BST-fig}
\end{figure}

\subsection{Related work on MST ratios}
The Euclidean degree-$K$ MST problem is NP-hard for $K\in\{2,3,4\}$ \cite{Papadimitriou1977,Papadimitriou1984,Francke2009}.
It is well-known that one can compute a degree-2 spanning tree with weight at most 2 times the MST weight, by doubling the MST edges, computing an Euler tour, and then short-cutting repeated vertices. The constant 2 is tight, as Fekete~\etal~\cite{Fekete1997} showed that for any fixed $\varepsilon >0$ there exist point sets whose degree-2 MST weight is not smaller than $2-\varepsilon$ times the MST weight. Therefore, $\tau_2=2$.

In 1984, Papadimitriou and Vazirani~\cite{Papadimitriou1984} asked whether the Euclidean geometry can be exploited to obtain factors better than 2 for degree-$3$ and degree-$4$ spanning trees. 
Following this question, in 1994, Khuller~\etal~\cite{Khuller1996} achieved the following bounds for $K=3, 4$: $1.103 < \tau_3\leqslant 1.5$ and $1.035 < \tau_4\leqslant 1.25$. The lower bounds are achieved by the center
plus vertices of a square and a regular pentagon, respectively; see Figures~\ref{lower-bounds-fig}(a) and \ref{lower-bounds-fig}(b). The upper bounds are obtained by a recursive algorithm that roots the MST at a leaf $v$, and then transforms it to a degree-$K$ spanning tree with the inductive hypothesis that the root should have degree at most $K-2$ in the new tree. The algorithm transforms the
subtrees rooted at the children of $v$ recursively, and then replaces the star (formed by $v$ and its children) by a small-weight path in which $v$ has degree at most $K-2$. A similar approach was studied before by Ravi~\etal~\cite{Ravi1993} for degree-3 spanning trees in metric spaces.
In 2003, Chan~\cite{Chan2004} revisited the ratios and showed that the upper bounds $1.5$ and $1.25$ are almost tight if we insist the root to have degree at most $K-2$. He managed to improve the upper bounds to $\tau_3<1.402$ and $\tau_4<1.143$ by a weaker inductive hypothesis that the root can have degree at most $K-1$ in the new tree, and by recursing not only on subtrees of the original MST but on trees formed by joining subtrees. A detailed analysis \cite{Jothi2009} shows that Chan's construction of degree-4 spanning trees gives upper bound $\tau_4<1.1381$.
Fekete~\etal~\cite{Fekete1997} conjectured that the lower bounds are tight, that is $\tau_3\approx 1.103$ and $\tau_4\approx 1.035$. 

\subsection{Related work on BST ratios}

Following the result of Itai~\etal~\cite{Itai1982} about the NP-hardness of the Hamiltonian path problem for rectangular grid graphs, Arkin~\etal~\cite{Arkin2009} showed the NP-hardness of this problem for some other classes of grid graphs, including hexagonal grid graphs. This immediately implies the NP-hardness of the Euclidean degree-2 BST problem, and its inapproximability in polynomial time by a factor better than $\sqrt{3}$ unless P=NP. 
Andersen and Ras~~\cite{Andersen2016} have shown that the Euclidean degree-3 BST problem is NP-hard and cannot be approximated in polynomial time by a factor better than $5\sqrt{2}/7$ unless P=NP; they left the status of the corresponding degree-4
problem open.

To obtain lower bounds for $\beta_K$, one could try classic examples (shown in Figure~\ref{lower-bounds-fig}) that achieve lower bounds for similar ratios (e.g., see \cite{Caragiannis2008,Chan2004,Fekete1997,Khuller1996,Parker1984}).
It can be verified simply that $\beta_3\geqslant \sqrt{2}$ and $\beta_4 >1.175$, by the the center plus vertices of a square and a regular pentagon; see Figures~\ref{lower-bounds-fig}(a) and \ref{lower-bounds-fig}(b). The point set in Figure~\ref{lower-bounds-fig}(c) shows that $\beta_2\geqslant 2$; the BST edges have length 1 while any degree-2 BST should pick one of the dashed edges which have length 2.

\begin{figure}[htb]
	\centering
	\setlength{\tabcolsep}{0in}
	$\begin{tabular}{ccc}
	\multicolumn{1}{m{.33\columnwidth}}{\centering\includegraphics[width=.18\columnwidth]{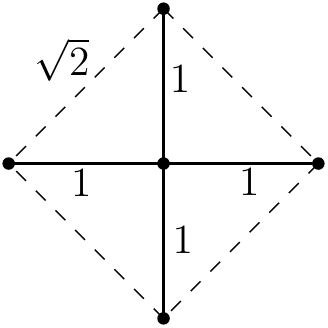}}
	&\multicolumn{1}{m{.33\columnwidth}}{\centering\includegraphics[width=.18\columnwidth]{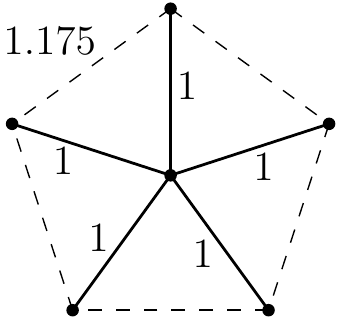}}
	&\multicolumn{1}{m{.33\columnwidth}}{\centering\includegraphics[width=.2\columnwidth]{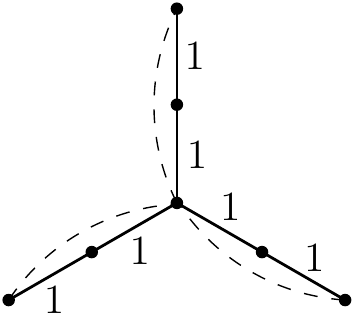}}\\
	(a) degree-3 &(b) degree-4&(c) degree-2
	\end{tabular}$
	\caption{Lower bound examples.}
	\label{lower-bounds-fig}
\end{figure}

The current best known upper bounds are $\beta_2\leqslant 3$, $\beta_3\leqslant 2$, $\beta_4\leqslant \sqrt{3}$ (the bounds $\beta_2 \leqslant 3$ and $\beta_3 \leqslant 2$ hold for any metric space).
As discussed earlier, the upper bound on $\beta_2$ is implied by a known result that the cube of every connected graph (in our case the BST) is Hamiltonian-connected \cite{Karaganis1968,Lesniak1973}; by the triangle inequality the largest edge-length in the cube graph is at most 3 times the largest edge-length in the BST. This is also hinted at in \cite[Exercise 37.2.3]{Cormen1990}. The upper bound on $\beta_3$ is implied by the degree-3 spanning tree algorithm of Ravi~\etal~\cite[Theorem 1.6]{Ravi1993} which replaces the star (formed by the MST's root and its children) with a path; again by the triangle inequality the largest edge-length in the path is at most 2 times the largest edge-length in the star. 
Andersen and Ras~\cite{Andersen2016,Andersen2018a,Andersen2108b} studied bottleneck bounded-degree spanning tree problems from theoretical and experimental points of views. In \cite{Andersen2016}, they obtained similar upper bounds for $\beta_2$ and $\beta_3$, and managed to show that $\beta_4\leqslant \sqrt{3}$ (they obtain this bound by a modified version of Chan's degree-4 spanning tree algorithm).

\subsection{Our contributions}

We focus on ratios of Euclidean bottleneck degree-$K$ spanning trees for $K\in\{2,3,4\}$.
We report the following improved bounds: $\beta_2\geqslant\sqrt{7}$, $\beta_3\leqslant \sqrt{3}$, and $\beta_4\leqslant \sqrt{2}$. For the lower bound, in Section~\ref{degree-2-section} we exhibit a point set in the plane for which the largest edge-length of any degree-2 BST is at least $\sqrt{7}$ times the largest edge-length of the BST. To achieve the upper bounds, we show that for any set of points in the plane there exist
degree-3 and degree-4 spanning trees with edge-lengths within factors $\sqrt{3}$ and $\sqrt{2}$,
respectively, of the BST's largest edge-length. Given the BST (which can be constructed in $O(n\log n)$ time for $n$ points), such trees can be computed in linear time. As a result, we obtain factor-$\sqrt{3}$ and factor-$\sqrt{2}$ approximation algorithms for the Euclidean bottleneck degree-3 and degree-4 spanning tree problems in the plane. The new algorithms are presented in Sections~\ref{degree-4-section} and \ref{degree-3-section}, and a preliminary of their analysis is given in Section~\ref{preliminary-section}. As part of our proofs of these ratios we show some structural properties of the MST which are of independent interest.

The new algorithms are recursive. Even though they are not complicated, the analysis of the degree-3 algorithm is rather involved. In contrast to previous bounded-degree spanning tree algorithms that recurse only on subtrees or joint-subtrees rooted at the children of the root, our algorithms recurse on these subtrees together with the edges connecting them to the root. Moreover, our degree-3 algorithm recurses not only on the children of the root, but also on its grandchildren.
The new algorithms also exploit the geometry of the Euclidean plane, and do not rely only on short-cutting and the triangle inequality which are the main hammers used for the known upper bounds $\beta_2\leqslant 3$ and $\beta_3\leqslant 2$.    


To add to the importance of our ratios, we refer to the works of Dobrev~\etal~\cite{Dobrev2012a,Dobrev2012b} and Caragiannis~\etal~\cite{Caragiannis2008} who studied similar ratios for Euclidean bottleneck strongly connected directed graphs of out-degree at most $K$. These works are motivated by the problem of replacing every omnidirectional antenna in a sensor network, with $K$ directional antennae of low transmission range, so that the resulting network is strongly connected.

\section{Preliminaries for the proofs: some MST properties}
\label{preliminary-section}
To facilitate our analysis, in this section we extract a set of structural properties of the minimum spanning tree in the Euclidean plane. To avoid use of fractional radians, we measure angles in degrees. We will frequently use the well-known fact that the angle between any two adjacent MST edges is at least $60^\circ$ (see \cite{Monma1992}) without mentioning it in all places.

For two points $p$ and $q$ in the plane we denote by $pq$ the straight line segment between $p$ and $q$, and by $|pq|$ the Euclidean distance between $p$ and $q$. Consider a vertex $v$ of degree at least 3 in the MST and assume that its incident edges are sorted radially. An {\em angle at $v$} is the angle between two consecutive edges. Two angles are {\em adjacent} if they share a boundary edge, and {\em nonadjacent} otherwise. We denote the degree of $v$ by $\deg(v)$. We say that two MST edges are {\em adjacent} only if they are incident to the same point (regardless of their relative positions in the radial sorting). The following two lemmas (though very simple) turn out to be crucial for our analysis.

\begin{lemma}
	\label{angle-lemma}
	Let $v$ be any vertex of the MST. The following statements hold for the angles at $v$:
	\begin{enumerate}[$(\!$i$)$]
		\item If $\deg(v)=3$ then there exists an angle that is at most $120^\circ$.
		\item If $\deg(v)=4$ then there exist two nonadjacent angles that are at most $90^\circ$ and $120^\circ$.
		\item If $\deg(v)=5$ then all angles are at most $120^\circ$, and there exist two nonadjacent angles that are at most $90^\circ$.  
	\end{enumerate} 
\end{lemma}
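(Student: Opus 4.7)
My plan is to use two basic facts: the angles at $v$ sum to $360^\circ$, and each angle is at least $60^\circ$ (the stated well-known MST property). All three parts then follow by short averaging/pigeonhole arguments.

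For part (i), if $\deg(v)=3$ the three angles sum to $360^\circ$, so their average is $120^\circ$; in particular at least one of them is at most $120^\circ$.

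For part (ii), label the four angles cyclically as $\alpha_1,\alpha_2,\alpha_3,\alpha_4$. The two nonadjacent pairs are $\{\alpha_1,\alpha_3\}$ and $\{\alpha_2,\alpha_4\}$, and their sums add to $360^\circ$, so one of the pairs, say $\{\alpha_1,\alpha_3\}$, has sum at most $180^\circ$. The smaller of $\alpha_1,\alpha_3$ is then at most $90^\circ$. The key extra observation — which is the whole reason the $60^\circ$ lower bound matters here — is that since each of $\alpha_1,\alpha_3$ is at least $60^\circ$ and their sum is at most $180^\circ$, each is at most $120^\circ$. So this nonadjacent pair witnesses the required bounds $90^\circ$ and $120^\circ$.

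For part (iii), the five angles sum to $360^\circ$ and each is at least $60^\circ$, so if any one exceeded $120^\circ$ the remaining four would sum to less than $240^\circ$, contradicting the $60^\circ$ lower bound; hence every angle is at most $120^\circ$. For the two nonadjacent angles of size at most $90^\circ$, let $\alpha_1$ be the smallest; since the average is $72^\circ$, $\alpha_1\leqslant 72^\circ$. Label the angles cyclically $\alpha_1,\ldots,\alpha_5$, so the two angles nonadjacent to $\alpha_1$ are $\alpha_3$ and $\alpha_4$. I plan to argue by contradiction: if both $\alpha_3>90^\circ$ and $\alpha_4>90^\circ$, then $\alpha_3+\alpha_4>180^\circ$, so $\alpha_1+\alpha_2+\alpha_5<180^\circ$, which together with $\alpha_1\geqslant 60^\circ$ and $\alpha_2+\alpha_5\geqslant 120^\circ$ gives a contradiction. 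Hence one of $\alpha_3,\alpha_4$, together with $\alpha_1$, forms the desired nonadjacent pair.

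The only ``obstacle'' is bookkeeping with cyclic indexing and being careful that the pair we exhibit is genuinely nonadjacent; the argument itself is purely elementary, using nothing beyond the angle-sum constraint and the $60^\circ$ lower bound between adjacent MST edges.
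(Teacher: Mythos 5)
Your proposal is correct and follows essentially the same route as the paper: part (ii) picks the nonadjacent pair with the smaller sum (hence $\leqslant 180^\circ$) and combines this with the $60^\circ$ lower bound, and part (iii) uses that the smallest angle together with its two neighbours already accounts for at least $180^\circ$, forcing one of the two nonadjacent angles to be at most $90^\circ$. The only difference is cosmetic (you phrase part (iii) as a contradiction where the paper argues directly), so there is nothing to fix.
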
	
\begin{proof}
	If $\deg(v)=3$ then the smallest angle at $v$ is at most $120^\circ$.
	
	Assume that $\deg(v)=4$. Let $\alpha_1,\alpha_2,\alpha_3$, and $\alpha_4$ be the angles at $v$ that are ordered radially. Without loss of generality assume that $\alpha_1+\alpha_3\leqslant \alpha_2+\alpha_4$, and thus $\alpha_1+\alpha_3\leqslant 180^\circ$. We claim that $\alpha_1$ and $\alpha_3$, which are nonadjacent, satisfy the angle constraints. Without loss of generality assume that $\alpha_1\leqslant \alpha_3$, and thus $\alpha_1\leqslant 90^\circ$. We already know that $\alpha_1\geqslant 60^\circ$. Thus $\alpha_3\leqslant 120^\circ$. 
	
	Assume that $\deg(v)=5$. The first part of the statement is implied by the fact that every angle at $v$ is at least $60^\circ$. For the second part, let $\alpha$ be the smallest angle at $v$, and observe that $60^\circ\leqslant \alpha\leqslant 72^\circ$. The sum of $\alpha$ and its two adjacent angles is at least $180^\circ$. Thus, $\alpha$ and its smallest nonadjacent angle (which is at most $90^\circ$) are desired angles. 
\end{proof}

\begin{lemma}
	\label{length-lemma}
	Let $pu$ and $uv$ be two adjacent MST edges and let $\alpha$ denote the convex angle between them. Then  $|pv|\leqslant 2\sin(\alpha/2)\cdot\max\{|pu|,|uv|\}$.
\end{lemma}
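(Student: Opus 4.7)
The plan is to use the law of cosines on triangle $puv$ and then exploit the minimum spanning tree angle property that $\alpha \geqslant 60^\circ$, which is the one geometric fact available beyond Euclidean distances.

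First I would set up notation: let $a = |pu|$, $b = |uv|$, and $M = \max\{a, b\}$, and assume without loss of generality that $a \leqslant b = M$. The law of cosines at $u$ gives
\[
|pv|^2 \;=\; a^2 + b^2 - 2ab\cos\alpha.
\]
The target inequality $|pv| \leqslant 2\sin(\alpha/2)\cdot M$ is equivalent, after squaring and using $4\sin^2(\alpha/2) = 2(1-\cos\alpha)$, to
\[
a^2 + b^2 - 2ab\cos\alpha \;\leqslant\; 2b^2(1-\cos\alpha).
\]
Rearranging produces $a^2 - b^2 \leqslant 2b\cos\alpha\,(a - b)$, and factoring the left side as $(a-b)(a+b)$ and dividing by $a - b \leqslant 0$ (flipping the inequality) reduces the claim to
\[
a + b \;\geqslant\; 2b\cos\alpha,
\qquad\text{i.e.,}\qquad
a \;\geqslant\; b(2\cos\alpha - 1).
\]

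The last step is where the MST hypothesis enters. Since $pu$ and $uv$ are adjacent MST edges, the angle between them satisfies $\alpha \geqslant 60^\circ$, so $2\cos\alpha - 1 \leqslant 0$, and the right-hand side is nonpositive while $a \geqslant 0$. This closes the case $a \leqslant b$; by symmetry the case $b \leqslant a$ is identical. (Equivalently, one can view this as showing that for fixed $\alpha$ and fixed $M$, the quantity $|pv|^2 = a^2 + M^2 - 2aM\cos\alpha$ is maximized on $[0,M]$ at the endpoint $a = M$ whenever $\alpha \geqslant 60^\circ$, giving the isoceles configuration $|pv| = 2M\sin(\alpha/2)$.)

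I do not anticipate a real obstacle here: the lemma is essentially an exercise in the law of cosines, and the only subtlety is that the reduced inequality $a \geqslant b(2\cos\alpha - 1)$ can fail for arbitrary $\alpha < 60^\circ$, so one must remember to invoke the standard $60^\circ$ angle lower bound for adjacent MST edges noted at the start of Section~\ref{preliminary-section}. The only book-keeping is making sure the direction of the inequality flips correctly when dividing by $a - b \leqslant 0$.
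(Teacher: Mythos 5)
Your proof is correct, and the algebra checks out: squaring the target inequality, using $4\sin^2(\alpha/2)=2(1-\cos\alpha)$, and dividing by $a-b\leqslant 0$ correctly reduces the claim to $a\geqslant b(2\cos\alpha-1)$, which holds precisely because the adjacent-MST-edge angle bound gives $\alpha\geqslant 60^\circ$ and hence $2\cos\alpha-1\leqslant 0$ (the degenerate case $a=b$ is an identity, so the division causes no trouble). The paper reaches the same conclusion by a synthetic route rather than an analytic one: it extends $p$ along the ray from $u$ to a point $p'$ with $|p'u|=|uv|$, observes that the isosceles triangle $\bigtriangleup up'v$ has $|p'v|=2\sin(\alpha/2)|uv|$, uses $\alpha\geqslant 60^\circ$ to conclude that $|p'v|$ is the longest side and hence the diameter of that triangle, and then bounds $|pv|$ by that diameter since the segment $pv$ lies inside the triangle. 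Both arguments hinge on the same underlying fact --- which you state explicitly in your closing parenthetical --- that for $\alpha\geqslant 60^\circ$ the quantity $|pv|$ is maximized at the isosceles configuration $a=M$. Your version is a self-contained computation requiring no containment or diameter facts; the paper's version is shorter on symbols but leans on the geometric observations that the diameter of a triangle is its longest side and that a chord of a convex region is at most its diameter. Either is a complete proof.
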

\begin{proof}
	Observe that $\alpha\geqslant 60^\circ$. After a suitable relabeling assume that $|pu|\leqslant|uv|$. Consider the ray emanating from $u$ and passing through $p$. Let $p'$ be the point on this ray such that $|p'u|=|uv|$; see Figure~\ref{Angelini-fig}(a). Then, the triangle $\bigtriangleup up'v$ is isosceles, and thus $|p'v|=2\sin(\alpha/2)|uv|$. The right hand side is at least $|uv|$ because $\alpha\geqslant 60^\circ$. Therefore, the diameter of $\bigtriangleup up'v$ is $|p'v|$. Since $pv$ lies in this triangle, its length is not more than the diameter. Thus, $|pv|\leqslant |p'v|=2\sin(\alpha/2)|uv|$.
\end{proof}

\begin{figure}[htb]
\centering
\setlength{\tabcolsep}{0in}
$\begin{tabular}{cc}
\multicolumn{1}{m{.5\columnwidth}}{\centering\includegraphics[width=.16\columnwidth]{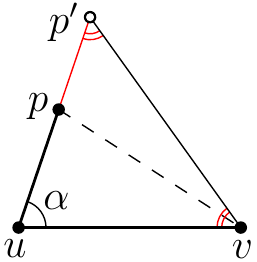}}
&\multicolumn{1}{m{.5\columnwidth}}{\centering\includegraphics[width=.18\columnwidth]{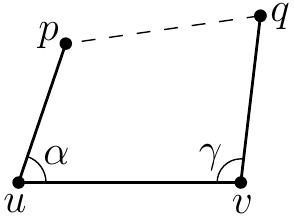}}\\
(a) & (b)
\end{tabular}$
	\caption{Illustrations of (a) Lemma~\ref{length-lemma} and (b) Theorems~\ref{Angelini-thr} and \ref{two-angle-thr}.}
	\label{Angelini-fig}
\end{figure}

Lemma~\ref{angle-lemma} and Lemma~\ref{length-lemma} suffice for the analysis of the degree-4 algorithm. The analysis of the degree-3 algorithm requires stronger tools. Corollary~\ref{Angelini-cor} and Theorem~\ref{two-angle-thr} play important roles in the analysis. Corollary~\ref{Angelini-cor} is implied by the following technical result of Angelini~\etal~\cite{Angelini2014}; see Figure~\ref{Angelini-fig}(b) for an illustration. 

\begin{theorem}[Angelini~\etal~\cite{Angelini2014}]
	\label{Angelini-thr}
	Let $pu$, $uv$, and $vq$ be three MST edges such that both $p$ and $q$ lie on the same side of the line through $uv$. Let $\alpha$ and $\gamma$ denote the convex angles at $u$ and $v$. If $\alpha\leqslant 80^\circ$, then $\gamma\geqslant 120^\circ - \alpha/2$.
\end{theorem}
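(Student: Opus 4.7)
The plan is to translate the setup into coordinates, extract constraints from the MST cycle property, and force the angle bound by a trigonometric computation.

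Normalize $|uv|=1$ and place $u=(0,0)$, $v=(1,0)$, with $p$ and $q$ both in the upper half-plane. Writing $a=|pu|$ and $c=|vq|$, we have $p=(a\cos\alpha,\,a\sin\alpha)$ and $q=(1-c\cos\gamma,\,c\sin\gamma)$. The MST cycle property applied to each of the three non-tree chords of the induced $K_4$ on $\{p,u,v,q\}$ yields
\[
|pv|^2\geq\max\{a^2,1\},\qquad |uq|^2\geq\max\{1,c^2\},\qquad |pq|^2\geq\max\{a^2,1,c^2\}.
\]
Expanding the first two inequalities gives the two-sided ranges $2\cos\alpha\leq a\leq 1/(2\cos\alpha)$ and $2\cos\gamma\leq c\leq 1/(2\cos\gamma)$, which in particular reproduces the $60^\circ$ minimum-angle property at $u$ and $v$.

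Toward a contradiction, suppose $\alpha\leq 80^\circ$ and $\gamma<120^\circ-\alpha/2$. Expanding $|pq|^2\geq 1$ and rearranging gives
\[
a(a-2\cos\alpha)+c(c-2\cos\gamma)+2ac\cos(\alpha+\gamma)\;\geq\;0.
\]
In the assumed regime one has $\alpha+\gamma>90^\circ$ (using $\gamma\geq 60^\circ$), so the third summand is strictly negative, while the first two are non-negative throughout the admissible box and vanish only at its lower-left corner. The unconstrained minimizer of $|pq|^2$ works out to $(a^*,c^*)=\bigl(\sin\gamma/\sin(\alpha+\gamma),\,\sin\alpha/\sin(\alpha+\gamma)\bigr)$, which by the law of sines is exactly the configuration in which $q$ coincides with the apex $p$ of the triangle on $uv$ with base angles $\alpha$ and $\gamma$. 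A direct check shows that the assumed inequalities $\alpha\leq 80^\circ$ and $\gamma<120^\circ-\alpha/2$ (together with the universal $60^\circ$ bound) place this minimizer inside the admissible box. The remaining step is to combine $|pq|^2\geq 1$ with the stronger requirements $|pq|^2\geq a^2$ and $|pq|^2\geq c^2$, each of which is a half-plane in $(a,c)$ that further trims the feasible set; one then argues that the intersection of these half-planes with the admissible box is empty precisely in the negated regime.

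The main obstacle is the case analysis hidden in that last step: which of $|pq|^2\geq 1$, $|pq|^2\geq a^2$, or $|pq|^2\geq c^2$ is binding at the extremal configuration depends on the relative sizes of $\alpha$ and $\gamma$, and I expect the threshold $\alpha=80^\circ$ to emerge precisely as the value at which the binding constraint switches from $|pq|^2\geq 1$ to one of the others. Once the correct binding constraint is pinned down, the remainder of the argument should reduce to a single trigonometric inequality amenable to product-to-sum simplification, with the substitution $\gamma=120^\circ-\alpha/2$ turning the target inequality into an identity.
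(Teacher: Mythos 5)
The paper does not actually prove this statement---it is imported verbatim from Angelini~\etal~\cite{Angelini2014} and used as a black box---so your attempt can only be judged on its own terms. Your reduction is sound as far as it goes: the coordinatization, the three cycle-property inequalities $|pv|\geqslant\max\{|pu|,|uv|\}$, $|uq|\geqslant\max\{|uv|,|vq|\}$, $|pq|\geqslant\max\{|pu|,|uv|,|vq|\}$, and the resulting admissible box $2\cos\alpha\leqslant a\leqslant 1/(2\cos\alpha)$, $2\cos\gamma\leqslant c\leqslant 1/(2\cos\gamma)$ are all correct, and recasting the theorem as ``this feasibility system is empty whenever $\gamma<120^\circ-\alpha/2$'' is a legitimate strategy.

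The gap is that the argument stops exactly where the work begins. Everything you actually verify---the sign of $2ac\cos(\alpha+\gamma)$, the vanishing of the first two summands at the lower-left corner, the location of the unconstrained minimizer of $|pq|^2$---rules out only isolated points of the box; it does not show the box contains no feasible point. Nor can the single constraint $|pq|^2\geqslant 1$ do so, because it is genuinely satisfiable in the negated regime: take $\alpha=60^\circ$ (forcing $a=1$) and $\gamma=89^\circ$; then $|pq|^2-1=c^2-2c\sin(30^\circ+\gamma)$, which is nonnegative for every $c\geqslant 2\sin(119^\circ)\approx 1.75$, comfortably inside the admissible range $c\leqslant 1/(2\cos 89^\circ)\approx 28.6$. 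To eliminate those points you must bring in $|pq|^2\geqslant c^2$, i.e.\ $c\leqslant 1/(2\sin(30^\circ+\gamma))\approx 0.57$ here, and show the two requirements are incompatible---and symmetrically with $|pq|^2\geqslant a^2$ in other parameter ranges. That case analysis, which you explicitly defer (``one then argues that the intersection\dots is empty''), is the entire content of the theorem, and the role of the $80^\circ$ threshold is only conjectured rather than derived from it. A further small inaccuracy: $|pq|^2\geqslant a^2$ and $|pq|^2\geqslant c^2$ are not half-planes in $(a,c)$; each is quadratic in one of the variables, so the ``intersection of half-planes with the box'' picture does not apply as stated. As it stands this is a plausible plan with a correct reduction, not a proof.
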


\begin{corollary}
	\label{Angelini-cor}
	Let $pu$, $uv$, and $vq$ be three MST edges such that both $p$ and $q$ lie on the same side of the line through $uv$. Let $\alpha$ and $\gamma$ denote the convex angles at $u$ and $v$. Then $\alpha+\gamma\geqslant 150^\circ$.
\end{corollary}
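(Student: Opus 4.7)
The plan is a short case analysis based on Theorem~\ref{Angelini-thr}, using the standard MST fact that every angle between two adjacent MST edges is at least $60^\circ$ (so in particular $\alpha,\gamma\geqslant 60^\circ$). The three cases are determined by whether each of $\alpha,\gamma$ is at most $80^\circ$.

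First, I would handle the case $\alpha\leqslant 80^\circ$. Applying Theorem~\ref{Angelini-thr} directly gives $\gamma\geqslant 120^\circ-\alpha/2$, hence
\[
\alpha+\gamma\;\geqslant\;\alpha+120^\circ-\alpha/2\;=\;120^\circ+\alpha/2\;\geqslant\;120^\circ+30^\circ\;=\;150^\circ,
\]
where the last inequality uses $\alpha\geqslant 60^\circ$. Next, by the symmetry of the hypothesis of Theorem~\ref{Angelini-thr} in the two endpoints of $uv$ (both $p$ and $q$ lie on the same side of the line through $uv$, so the roles of $u$ and $v$ can be swapped), the same argument with $\gamma$ in place of $\alpha$ handles the case $\gamma\leqslant 80^\circ$, again yielding $\alpha+\gamma\geqslant 150^\circ$.

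Finally, in the remaining case both $\alpha>80^\circ$ and $\gamma>80^\circ$, so $\alpha+\gamma>160^\circ\geqslant 150^\circ$. Combining the three cases proves the corollary.

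There is no real obstacle; the only thing to double-check is that the hypothesis of Theorem~\ref{Angelini-thr} is genuinely symmetric in $u$ and $v$ so that one can apply it with the roles swapped in the second case. Since the condition ``both $p$ and $q$ lie on the same side of the line through $uv$'' is symmetric in $u$ and $v$, and the three edges $pu$, $uv$, $vq$ play symmetric roles after relabeling ($q,v,u,p$ in place of $p,u,v,q$), this symmetry is immediate.
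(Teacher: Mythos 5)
Your proposal is correct and matches the paper's own proof essentially verbatim: the same case split on whether both angles exceed $80^\circ$, and the same application of Theorem~\ref{Angelini-thr} together with $\alpha,\gamma\geqslant 60^\circ$ in the remaining case. Your extra remark verifying the $u$--$v$ symmetry is implicit in the paper's phrase ``one of them, say $\alpha$,'' and is a harmless clarification.
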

\begin{proof}
If both angles are larger than $80^\circ$ then the statement follows immediately. If one of them, say $\alpha$, is at most $80^\circ$ then by Theorem~\ref{Angelini-thr} and the fact that $\alpha,\gamma\geqslant 60^\circ$ we have $\alpha+\gamma\geqslant \alpha+120^\circ-\alpha/2=120^\circ+\alpha/2\geqslant 150^\circ$.
\end{proof} 

We prove Theorem~\ref{two-angle-thr} in Section~\ref{proof-section}. This theorem, which is illustrated in Figure~\ref{Angelini-fig}(b), deals with a maximization problem which has five variables at first glance. We use a sequence of geometric transformations to reduce the number of variables and simplify the proof.

\begin{theorem}
	\label{two-angle-thr}
	Let $pu$, $uv$, and $vq$ be three MST edges such that both $p$ and $q$ lie on the same side of the line through $uv$. Let $\alpha$ and $\gamma$ denote the convex angles at $u$ and $v$. If $\alpha+\gamma\leqslant 210^\circ$, then $|pq|\leqslant \sqrt{3}\cdot\max\{|pu|,|uv|,|vq|\}$.
\end{theorem}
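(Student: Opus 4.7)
Place $u$ at the origin and $v = (b,0)$ on the positive $x$-axis; with $p$ and $q$ in the upper half-plane, write $p = (a\cos\alpha,\, a\sin\alpha)$ and $q = (b-c\cos\gamma,\, c\sin\gamma)$, where $a = |pu|$ and $c = |vq|$. A direct expansion gives
\[
|pq|^2 \;=\; a^2+b^2+c^2 - 2ab\cos\alpha - 2bc\cos\gamma + 2ac\cos(\alpha+\gamma).
\]
Scaling so that $M := \max\{a,b,c\} = 1$, the goal becomes $|pq|^2 \leqslant 3$. Besides $\alpha,\gamma \geqslant 60^\circ$ and, by Corollary~\ref{Angelini-cor} combined with the hypothesis, $\alpha+\gamma \in [150^\circ,210^\circ]$, the MST cycle property applied to the non-MST edges $pv$ and $uq$ forces $|pv|\geqslant\max\{a,b\}$ and $|uq|\geqslant\max\{b,c\}$. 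Via the law of cosines these translate, when $\alpha < 90^\circ$, to $a \in [2b\cos\alpha,\,b/(2\cos\alpha)]$, with the symmetric statement for $c$; when $\alpha \geqslant 90^\circ$ the constraint on $a$ is automatic.

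\textbf{Key identity at an MST boundary.} For fixed $b,c,\alpha,\gamma$, the function $|pq|^2$ is a convex quadratic in $a$, so its maximum on the admissible interval is attained at an endpoint. A short computation — using the product-to-sum identity $2\cos\alpha\cos(\alpha+\gamma) = \cos\gamma + \cos(2\alpha+\gamma)$ — shows that at the lower MST boundary $a = 2b\cos\alpha$ the five-parameter expression collapses to
\[
|pq|^2 \;=\; b^2 + c^2 + 2bc\cos(2\alpha+\gamma),
\]
while at the upper MST boundary $a = b/(2\cos\alpha)$ it collapses to $|pq|^2 = a^2 + c^2 - 2ac\cos(\alpha-\gamma)$. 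Geometrically the first identity is just the law of cosines in $\bigtriangleup pvq$: the lower-boundary condition makes $\bigtriangleup puv$ isosceles with $|pv|=b$, so $\angle pvq = 2\alpha+\gamma-180^\circ$. The hypothesis $\alpha+\gamma\leqslant 210^\circ$ together with $\alpha\leqslant 90^\circ$ forces $2\alpha+\gamma\leqslant 300^\circ$, while $\alpha\geqslant 60^\circ$ and $\alpha+\gamma\geqslant 150^\circ$ give $2\alpha+\gamma\geqslant 210^\circ$; on $[210^\circ,300^\circ]$ one has $\cos\leqslant 1/2$, so $|pq|^2 \leqslant b^2+c^2+bc \leqslant 3$, with equality approached only in the limit $\alpha=90^\circ,\gamma=120^\circ,b=c=1$ (which forces $a\to 0$). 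The upper-boundary identity is easier: since $\alpha,\gamma\in[60^\circ,150^\circ]$, $|\alpha-\gamma|\leqslant 90^\circ$ gives $\cos(\alpha-\gamma)\geqslant 0$ and hence $|pq|^2 \leqslant a^2+c^2 \leqslant 2$.

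\textbf{Remaining endpoints and main obstacle.} The remaining endpoints for $a$ — namely $a=1$ (the $M$-bound) and, when $\alpha\geqslant 90^\circ$, the degenerate limit $a\to 0$ — are dispatched by repeating the convexity reduction in the variable $c$, whose admissible interval is bounded by the analogous MST condition. The same product-to-sum manipulation yields the symmetric identities $|pq|^2 = a^2+b^2+2ab\cos(\alpha+2\gamma)$ (at $c=2b\cos\gamma$) and $|pq|^2 = a^2+b^2-2ab\cos(\alpha-\gamma)$ (at $c=b/(2\cos\gamma)$); the same angular-range argument closes each case and identifies the symmetric extremal limit $\alpha=120^\circ,\gamma=90^\circ,a=b=1,c\to 0$. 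The main obstacle is managing the case enumeration cleanly, since each of $a$ and $c$ can sit at one of several possible boundaries and $b$ itself can vary within MST constraints; what makes the argument tractable is the algebraic miracle that at any MST boundary the five-parameter formula collapses to a two-parameter one, reducing the whole theorem to the trigonometric inequality $\cos(2\alpha+\gamma)\leqslant 1/2$ (and its symmetric counterpart $\cos(\alpha+2\gamma)\leqslant 1/2$) on the admissible angle region.
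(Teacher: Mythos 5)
Your approach is genuinely different from the paper's. The paper proves this theorem by a chain of geometric transformations: it rotates $q$ about $v$ to push $\gamma$ up to $210^\circ-\alpha$ (using the Abu-Affash empty-triangle lemma and non-crossing of the MST to justify the rotation direction), then applies a separate ``moving-point'' lemma twice to push $|vq|$ and then $|pu|$ to extreme values, and finally uses the empty-circle property of the MST to lower-bound $|pu|$ when $\alpha<90^\circ$; each terminal case is a one-parameter trigonometric function checked on an interval. You instead write $|pq|^2$ in coordinates, encode the MST structure purely through the cycle property ($|pv|\geqslant\max\{|pu|,|uv|\}$, $|uq|\geqslant\max\{|uv|,|vq|\}$) and the angle bounds, and exploit convexity of the quadratic in $a$ and in $c$ to reduce to boundary configurations where the product-to-sum identity collapses everything to $\cos(2\alpha+\gamma)\leqslant 1/2$ or $\cos(\alpha-\gamma)\geqslant 0$. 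The computations I checked are right (the collapsed identities, the interval $[2b\cos\alpha,\,b/(2\cos\alpha)]$, the angular ranges, and the identification of the tight configuration $\alpha=90^\circ,\gamma=120^\circ,b=c=1,a\to0$, which matches the paper's tightness remark). Your relaxation is valid since you only impose necessary conditions, and it avoids the paper's reliance on Lemma~\ref{Abu-Affash-lemma} and Lemma~\ref{moving-point-lemma} entirely. One typo: at $c=b/(2\cos\gamma)$ the collapsed form is $a^2+c^2-2ac\cos(\alpha-\gamma)$, not $a^2+b^2-2ab\cos(\alpha-\gamma)$ (harmless, since both give a bound of $2$).

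There is, however, a real hole in the final case enumeration. After reducing $a$ to an endpoint and then, for $a\in\{0,1\}$, reducing $c$ to an endpoint, the endpoints of the $c$-interval include not only the MST boundaries $2b\cos\gamma$ and $b/(2\cos\gamma)$ but also $c=1$ and (when $\gamma\geqslant90^\circ$) $c\to0$. Your claim that ``the same angular-range argument closes each case'' does not cover the resulting corners $(a,c)\in\{0,1\}^2$: there the five-parameter expression does not collapse to either of your two cosine inequalities. These corners do all satisfy the bound, but each needs its own short argument: for $a=1,\,c\to0$ you must observe that $c\to0$ forces $\gamma\geqslant90^\circ$, hence $\alpha\leqslant120^\circ$ and $|pq|^2=1+b^2-2b\cos\alpha\leqslant1+b^2+b\leqslant3$ (this is the extremal corner you name, but the implication $\gamma\geqslant 90^\circ\Rightarrow\alpha\leqslant120^\circ$ is the step that actually saves it, since $\alpha$ alone can reach $150^\circ$); the case $a=0,\,c=1$ is symmetric; and $a=c=1$ requires bounding $2+b^2-2b(\cos\alpha+\cos\gamma)+2\cos(\alpha+\gamma)$, e.g.\ via $\cos(\alpha+\gamma)\leqslant-\sqrt3/2$ and $\cos\alpha+\cos\gamma=2\cos\bigl(\tfrac{\alpha+\gamma}{2}\bigr)\cos\bigl(\tfrac{\alpha-\gamma}{2}\bigr)\geqslant-2\sin15^\circ$, which gives a value below $3$. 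With these corners written out, your proof is complete and self-contained.
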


\paragraph{Remark:} The upper bound $210^\circ$ on $\alpha+\gamma$ in the statement of Theorem~\ref{two-angle-thr} is tight in the sense that if we replace it by $(210+\epsilon)^\circ$, then there exist MST edges $pu, uv, vq$ for which the objective inequality does not hold. For example take $\alpha=90^\circ$, $\gamma=(120+\epsilon)^\circ$, and $|uv|=|vq|=1$. Then $|uq|>\sqrt{3}$. By placing $p$ very close to $u$ such that $|pu|$ tends to zero, we make $|pq|$ larger than $\sqrt{3}$.

\section{Degree-4 spanning tree algorithm}
\label{degree-4-section}

Our algorithm is recursive. The algorithm recurses not only on rooted subtrees of the original tree, but on rooted subtrees together with the edges connecting them to their parents.
For a rooted tree $T$ and a single vertex $r\notin T$, we denote by $T+r$ the rooted tree obtained by making the root of $T$ a child of $r$; see Figure~\ref{degree4-fig}(left).

\begin{figure}[htb]
	\centering
	\includegraphics[width=.78\columnwidth]{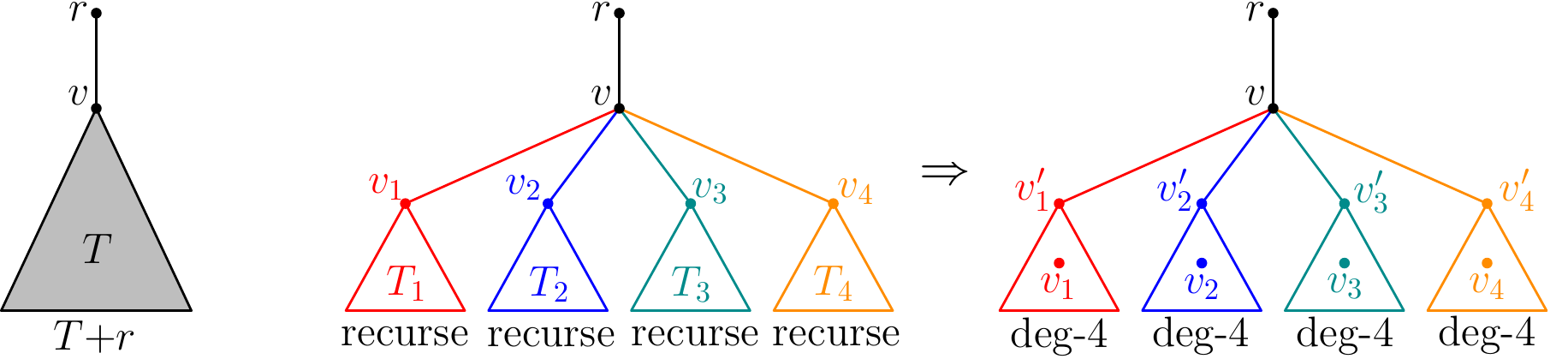}
	\caption{Left: Illustration of the tree $T+r$. Right: Transforming the trees $T_i+v$ recursively.}
	\label{degree4-fig}
\end{figure}

We are given a minimum spanning tree $\mathcal{T}$ of a set of points in the plane, which we may assume \cite{Monma1992} has maximum degree at most $5$. Notice that $\mathcal{T}$ is also a bottleneck spanning tree \cite{Camerini1978}. Root $\mathcal{T}$ at a fixed leaf $r$ so that each vertex has at most four children. Let $v$ denote the only child of $r$ and let $T$ denote the subtree rooted at $v$. Then $\mathcal{T}=T+r$, as in Figure~\ref{degree4-fig}(left). Let $b(T+r)$ denote the largest edge-length of $T+r$. 
Our recursive algorithm transforms the rooted tree $T+r$ into a new degree-4 spanning tree, with the inductive hypothesis that
\begin{center}
\begin{minipage}[c][][b]{0.8\textwidth}
	{\em the root $r$ has degree 1 and $v$ has degree at most 3 in the new tree, and the largest edge-length of the new tree is at most $\sqrt{2}\cdot b(T+r)$.}
\end{minipage} 
\end{center}

We note that after transformation, $v$ may not be the child of $r$ in the new tree. 

The algorithm works as follows. After a suitable rescaling we may assume that $b(T+r)=1$. Let $v_1,\dots,v_k$ be the $k (\leqslant 4)$ children of $v$ in $T$ that are ordered radially. Let $T_1,\dots,T_k$ be the subtrees rooted at $v_1,\dots,v_k$. Transform $T_1+v,\dots,T_k+v$ recursively, and let $T'_1+v,\dots,T'_k+v$ be the resulting new degree-4 trees. See Figure~\ref{degree4-fig} for an illustration.  
By the inductive hypothesis, in each $T'_i+v$, the vertex $v$ has degree 1 and the vertex $v_i$ has degree at most 3. 
Let $v'_i$ be the only child of $v$ in each tree $T'_i+v$, and again notice that $v'_i$ might be different from $v_i$. If the child $v'_i$ is different from $v_i$ then we say that $v'_i$ is {\em adopted}, otherwise $v'_i$ is called {\em natural}.

\begin{figure}[htb]
	\centering
	\setlength{\tabcolsep}{0in}
	$\begin{tabular}{cc}
	\multicolumn{1}{m{.5\columnwidth}}{\centering\includegraphics[width=.23\columnwidth]{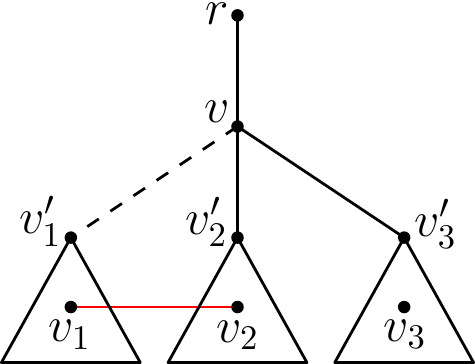}}
	&\multicolumn{1}{m{.5\columnwidth}}{\centering\includegraphics[width=.33\columnwidth]{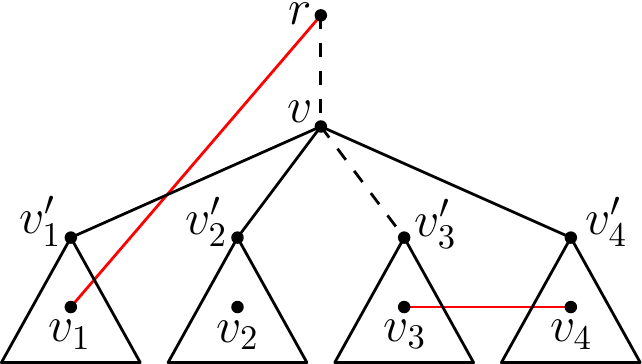}}\\
	(a)&(b)
	\end{tabular}$
	\caption{Local replacement of edges: (a) $k=3$ and $\alpha$ is defined by $vv_1$ and $vv_2$. (b) $k=4$, $\alpha_1$ is defined by $vr$ and $vv_1$, and $\alpha_2$ is defined by $vv_3$ and $vv_4$.}
	\label{deg4-fig}
\end{figure}

The above transformation of trees $T_i+v$ does not change the degrees of $v$ and $r$. Moreover, for every $i\in\{1,\dots,k\}$, we have that $|vv'_i|\geqslant |vv_i|$ because otherwise $vv_i$ should not be an edge of the original minimum spanning tree $\mathcal{T}$.  
After transforming trees $T_i+v$, we replace the edges $vr$, $vv'_1,\dots,vv'_k$ locally to obtain a transformation of $T+r$. To do so, we differentiate between different values of $k$.  
\begin{itemize}
	\item $k\leqslant 2$. In this case $\deg(v)\leqslant 3$ and $\deg(r)=1$. We just leave the edges $rv,vv'_1,vv'_2$ in.
	\item $k=3$. We describe this case in detail. In this case $\deg(v)=4$ and there exists an angle $\alpha\leqslant 90^\circ$ at $v$ in the original tree $T+r$. If $\alpha$ is defined by two edges $vv_i$ and $vv_{i+1}$ then we add the edge $v_iv_{i+1}$ and remove $vv'_i$, as in Figure~\ref{deg4-fig}(a). After this replacement, $v$ has degree $3$, $r$ has degree $1$, and each of $v_i$ and $v_{i+1}$ has degree at most $4$. Moreover, by Lemma~\ref{length-lemma} the length of the new edge $v_iv_{i+1}$ is at most $2\sin(\alpha/2)\cdot\max\{|vv_i|,|vv_{i+1}|\}\leqslant 2\sin(\alpha/2)\leqslant \sqrt{2}$.
	
	If $\alpha$ is defined by $vr$ and an edge $vv_i$ then we add $rv_i$ and remove $rv$. After this replacement, $v$ has degree $3$, $r$ has degree $1$, and $v_i$ has degree at most $4$. Again by Lemma~\ref{length-lemma} we have $|rv_i|\leqslant 2\sin(\alpha/2)\cdot\max\{|vr|,|vv_i|\}\leqslant \sqrt{2}$.
	\item $k=4$. In this case $\deg(v)=5$, and thus by Lemma~\ref{angle-lemma} there exist two nonadjacent angles $\alpha_1, \alpha_2\leqslant 90^\circ$ at $v$ in the original tree $T+r$. We process $\alpha_1$ as follows: If $\alpha_1$ is defined by two edges $vv_i$ and $vv_{i+1}$ then add $v_iv_{i+1}$ and remove $vv'_i$, but if $\alpha_1$ is defined by $vr$ and an edge $vv_i$ then add $rv_i$ and remove $rv$. We process $\alpha_2$ analogously. See Figure~\ref{deg4-fig}(b). After processing both angles, $v$ has degree $3$, $r$ has degree $1$, and each $v_i$ has degree at most $4$. It is implied by Lemma~\ref{length-lemma} that the length of each new edge is at most $\sqrt{2}$.
\end{itemize}
Therefore, we obtain a new tree that satisfies the inductive hypothesis, and thus a ratio of $\sqrt{2}$ has been established. The above local replacements take constant time per root. Thus, given the initial degree-5 MST (which is also a BST and can be constructed in $O(n \log n)$ time for $n$ points \cite{Monma1992}), the algorithm runs in linear time.

\paragraph{Remark:} Our analysis of the ratio $\sqrt{2}$ is tight under our inductive hypothesis that ``the root $r$ must have degree $1$ and $v$ must have degree at most $3$ in the new tree''; the example in Figure~\ref{lower-bounds-fig}(a) indicates why.

\section{Degree-3 spanning tree algorithm}
\label{degree-3-section}

Let $T+r$ be a degree-5 minimum spanning tree that is rooted at a leaf $r$, and let $v$ be the only child of $r$. 
Our approach for degree-3 spanning trees is similar to that of degree-4 trees, except the degree of $v$ that should be at most 2. The algorithm transforms $T + r$ into a new degree-3 spanning tree, with the inductive hypothesis that
\begin{center}
	\begin{minipage}[c][][b]{0.8\textwidth}
		{\em the root $r$ has degree 1 and $v$ has degree at most 2 in the new tree, and the largest edge-length of the new tree is at most $\sqrt{3}\cdot b(T+r)$.}
	\end{minipage} 
\end{center}

Assume that $b(T+r)=1$. Let $v_1,\dots,v_k$ be the children of $v$ that are ordered radially, and let $T_{v_1},\dots,T_{v_k}$ be the subtrees rooted at these vertices, respectively (in this section the name $T_{v_i}$ is more convenient than $T_i$). Logically, similar to the degree-4 algorithm, our degree-3 algorithm should first transform the trees $T_{v_i}+v$ and then replace the edges incident to $v$ locally. However, when $k=4$ (i.e., when $\deg(v)=5$) we are not able to replace three of the incident edges without breaking the degree constraint for $r$ or for a $v_i$. As such we differentiate between two cases where $k\leqslant 3$ and $k=4$. 

\vspace{8pt}
\noindent{\bf Case $k\leqslant 3$.} Transform $T_{v_1}+v,\dots,T_{v_k}+v$ recursively to obtain new degree-3 trees. Let $v'_1,\dots,v'_k$ be the children of $v$ in the new trees. After this transformation, the degree of each $v_i$ is at most 2, and the degrees of $r$ and $v$ remain unchanged. Moreover, for each $i$ we have $|vv'_i|\geqslant |vv_i|$. To obtain a transformation of $T+r$, we then replace $vr,vv'_1,\dots,vv'_k$ locally. 

\begin{figure}[htb]
	\centering
	\setlength{\tabcolsep}{0in}
	$\begin{tabular}{cc}
	\multicolumn{1}{m{.5\columnwidth}}{\centering\includegraphics[width=.15\columnwidth]{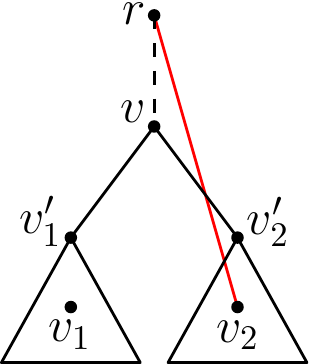}}
	&\multicolumn{1}{m{.5\columnwidth}}{\centering\includegraphics[width=.25\columnwidth]{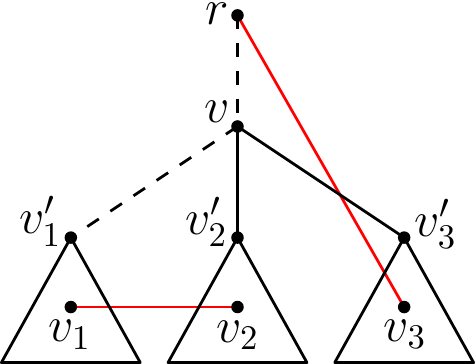}}\\
	(a)&(b)
	\end{tabular}$
	\caption{Local replacement of edges: (a) $k=2$ and $\alpha$ is defined by $vr$ and $vv_2$. (b) $k=3$, $\alpha_1$ is defined by $vv_1$ and $vv_2$, and $\alpha_2$ is defined by $vr$ and $vv_3$.}
	\label{deg3-fig}
\end{figure}

\begin{itemize}
	\item $k\leqslant 1$. In this case $\deg(v)\leqslant 2$ and $\deg(r)=1$. We just leave the edges $rv,vv'_1$ in.
	\item $k=2$. Then $\deg(v)=3$. By Lemma~\ref{angle-lemma} there exists an angle $\alpha\leqslant 120^\circ$ at $v$ in the original tree $T+r$. If $\alpha$ is defined by two edges $vv_i$ and $vv_{i+1}$ then add $v_iv_{i+1}$ and remove $vv'_i$. If $\alpha$ is defined by $vr$ and an edge $vv_i$ then add $rv_i$ and remove $rv$, as in Figure~\ref{deg3-fig}(a). In either case, after the replacement, $v$ has degree $2$, $r$ has degree $1$, and each $v_i$ has degree at most $3$. Moreover, by Lemma~\ref{length-lemma} the length of the new edge is at most $2\sin(\alpha/2)\leqslant \sqrt{3}$.
	\item $k=3$. Then $\deg(v)=4$. By Lemma~\ref{angle-lemma} there exist two nonadjacent angles $\alpha_1, \alpha_2\leqslant 120^\circ$ at $v$ in $T+r$. We process each of $\alpha_1$ and $\alpha_2$ similar to $\alpha$ in the previous case; see Figure~\ref{deg3-fig}(b). After processing both angles, $v$ has degree $2$, $r$ has degree $1$, each $v_i$ has degree at most $3$, and by Lemma~\ref{length-lemma} the lengths of new edges are at most $\sqrt{3}$.
\end{itemize}

\noindent
{\bf Case $k=4$.} Here is the place where we need more technical results. To see the difficulty of this case we refer to the importance of the non-adjacency of $\alpha_1$ and $\alpha_2$ in case $k=3$. Since these two angles are nonadjacent we were able to replace two incident edges (to $v$) without increasing the degree of each $v_i$ by more than $1$. In the current case, $\deg(v)=5$, and thus to satisfy the degree constraint for $v$ we need to replace three incident edges. However, there are only five angles at $v$, and thus we are unable to find three nonadjacent angles. 

It might be tempting to attach two new edges to a vertex $v_i$ and remove the edge $vv_i$; this would increase the degree of $v_i$ by at most $1$. We should be careful here as the edge $vv_i$ may not be present after transforming the trees $T_{v_i}+v$ because all children of $v$ could be adopted. In this case, we may not even be able to attach adopted children together or to natural children without breaking the edge-length constraint. 

To handle this case, our idea is to recurse not only on the children of $v$, but also on its grandchildren. This gives rise to somewhat lengthier analysis. Also, a technical complication arises because now we need to bound the distance between endpoints of two nonadjacent MST edges.

Let $v_1,\dots,v_4$ be the children of $v$ in counterclockwise order around $v$, such that their grandparent $r$ lies between $v_1$ and $v_4$. Let $\alpha_1=\angle rvv_1$, $\alpha_2=\angle v_1vv_2$, $\alpha_3=\angle v_2vv_3$, $\alpha_4=\angle v_3vv_4$, and $\alpha_5=\angle v_4vr$, as in Figure~\ref{deg3-k5-fig}(a).
Since $\alpha_3\geqslant 60^\circ$, the smallest of $\alpha_1+\alpha_2$ and $\alpha_4+\alpha_5$ is at most $150^\circ$. After a suitable reflection and relabeling assume that $\alpha_1+\alpha_2\leqslant 150^\circ$. Now we are going to recurse on the children of both $v$ and $v_1$. Let $u_1,\dots, u_l$ be the $l(\leqslant 4)$ children of $v_1$ in clockwise order around $v_1$, such that their grandparent $v$ lies between $u_1$ and $u_2$, as in Figure~\ref{deg3-k5-fig}(c).

\begin{figure}[htb]
	\centering
	\setlength{\tabcolsep}{0in}
	$\begin{tabular}{cc}
	\multicolumn{1}{m{.5\columnwidth}}{\centering\includegraphics[width=.33\columnwidth]{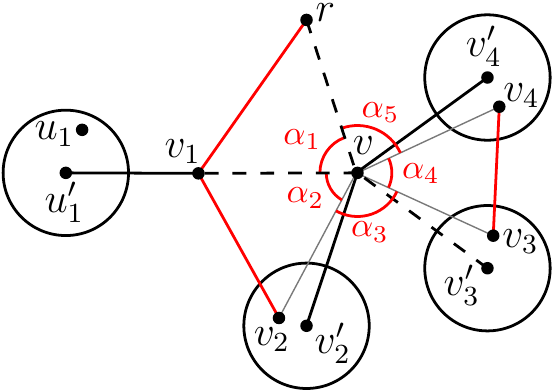}}
	&\multicolumn{1}{m{.5\columnwidth}}{\centering\includegraphics[width=.33\columnwidth]{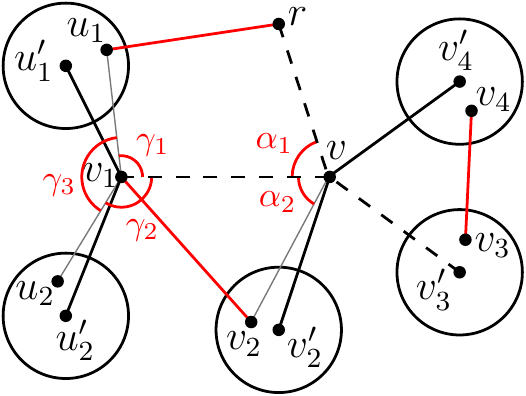}}\\
	(a) \text{$l\leqslant 1$} & (b) \text{$l= 2$}
	\\[.3cm]
	\multicolumn{1}{m{.5\columnwidth}}{\centering\includegraphics[width=.38\columnwidth]{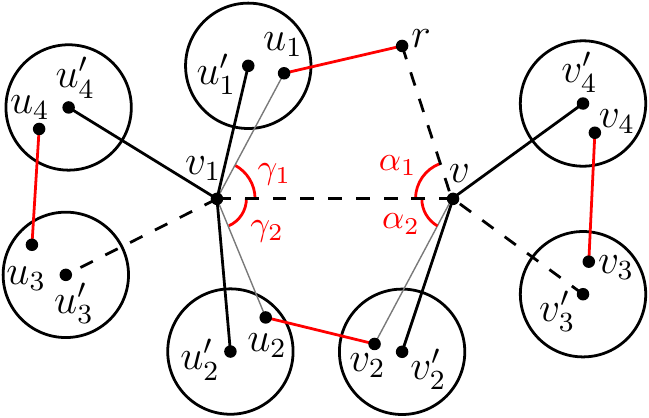}}
	&\multicolumn{1}{m{.5\columnwidth}}{\centering\includegraphics[width=.38\columnwidth]{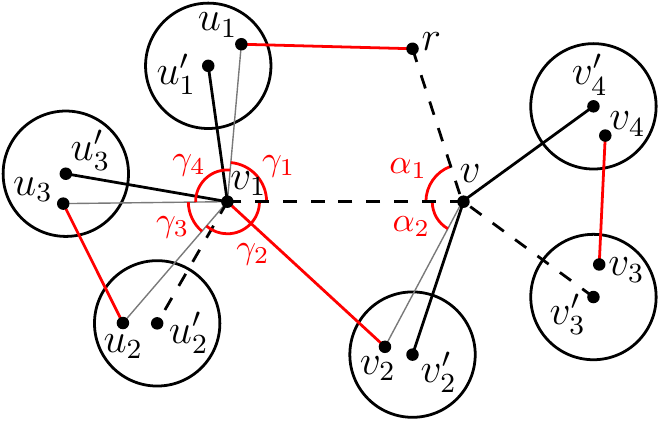}}\\
	(c) \text{$l= 4$} & (d) \text{$l= 3$}
	\end{tabular}$
	\caption{Obtaining degree-3 spanning tree when $k=4$, i.e., when $\deg(v)=5$.}
	\label{deg3-k5-fig}
\end{figure}

Transform $T_{v_2}+v,T_{v_3}+v,T_{v_4} + v$ recursively to obtain new degree-3 trees, and let $v'_2,v'_3, v'_4$ be the children of $v$ in the new trees.
Also transform $T_{u_1}+v_1,\dots,T_{u_l} + v_1$ recursively to obtain new degree-3 trees, and let $u'_1,\dots, u'_l$ be the children of $v_1$ in the new trees. In Figure~\ref{deg3-k5-fig} every new tree is shown by a circle and a connecting black edge to the parent. 
After these transformations, each of $v_2$, $v_3$, $v_4$, $u_1,\dots,u_l$ has degree at most 2, by the inductive hypothesis.
Now we are going to replace the edges incident to $v$ and $v_1$ locally to obtain a transformation of $T+r$. Depending on the value of $l$ we consider four cases. We perform the replacement in such way that at the end of each case the following constraints hold: $\deg(r)=1$, $\deg(v)=2$, $\deg(v_i)\leqslant 3$ for every $i\in\{1,\dots, 4\}$, $\deg(u_j)\leqslant 3$ for every $j\in\{1,\dots,l\}$, and the length of every new edge is at most $\sqrt{3}$. 

Before proceeding to the cases we note that since $\deg(v)=5$, by Lemma~\ref{angle-lemma} all angles at $v$ in $T+r$ are at most $120^\circ$. Thus, by Lemma~\ref{length-lemma} the distance between any two consecutive neighbors of $v$ in $T+r$ is at most $2\sin(60^\circ)=\sqrt{3}$.

\begin{itemize}
	\item $l\leqslant 1$. In this case add $rv_1$, $v_1v_2$, $v_3v_4$ and remove $rv$, $v_1v$, $v'_3v$; see Figure~\ref{deg3-k5-fig}(a). 
		
	\item $l = 2$. Let $\gamma_1=\angle u_1v_1v$, $\gamma_2=\angle vv_1u_2$, and $\gamma_3=\angle u_2v_1u_1$, as in Figure~\ref{deg3-k5-fig}(b). If $\gamma_3\leqslant 120^\circ$, then add $u_1u_2$, $rv_1$, $v_1v_2$, $v_3v_4$ and remove $v_1u'_1$, $rv$, $v_1v$, $v'_3v$. Assume that $\gamma_3\geqslant 120^\circ$. Then $\gamma_1+\gamma_2\leqslant 240^\circ$ and consequently $\alpha_1+\alpha_2+\gamma_1+\gamma_2\leqslant 390^\circ$. Thus we have $\alpha_1 + \gamma_1\leqslant 195^\circ$ or $\alpha_2 + \gamma_2\leqslant 195^\circ$. If $\alpha_1 + \gamma_1\leqslant 195^\circ$ then add $ru_1$, $v_1v_2$, $v_3v_4$ and remove $rv$, $v_1v$, $v'_3v$ (this case is depicted in Figure~\ref{deg3-k5-fig}(b)); notice that $|ru_1|\leqslant \sqrt{3}$ by Theorem~\ref{two-angle-thr}. If $\alpha_2 + \gamma_2\leqslant 195^\circ$ then add $rv_1$, $u_2v_2$, $v_3v_4$ and remove $rv$, $v_1v$, $v'_3v$; notice that $|u_2v_2|\leqslant \sqrt{3}$ by Theorem~\ref{two-angle-thr}.

	\item $l = 4$. We describe this case first, as it is simpler. Since $\deg(v_1) = 5$, it is implied by Lemma~\ref{angle-lemma} and Lemma~\ref{length-lemma} that the distance between any two consecutive neighbors of $v_1$ in $T + r$ is at most $\sqrt{3}$. In this case we add $ru_1$, $u_2v_2$, $u_3u_4$, $v_3v_4$ and remove $rv$, $v_1v$, $u'_3v_1$, $v'_3v$, as in Figure~\ref{deg3-k5-fig}(c). We only need to show that $|ru_1|$ and $|u_2v_2|$ are at most $\sqrt{3}$.
		
	Let $\gamma_1=\angle u_1v_1v$ and $\gamma_2=\angle vv_1u_2$, as in Figure~\ref{deg3-k5-fig}(c). Observe that  $\gamma_1+\gamma_2\leqslant 180^\circ$, and thus $\alpha_1+\alpha_2+\gamma_1+\gamma_2\leqslant 330^\circ$. By Corollary~\ref{Angelini-cor} we have $\alpha_1+\gamma_1\geqslant 150^\circ$ and $\alpha_2+\gamma_2\geqslant 150^\circ$. Combining these inequalities, we ge that $\alpha_1+\gamma_1$ and $\alpha_2+\gamma_2$ are at most $330^\circ - 150^\circ= 180^\circ$. Having these constraints, it is implied by Theorem~\ref{two-angle-thr} that $|ru_1|$ and $|u_2v_2|$ are at most $\sqrt{3}$.

	\item $l = 3$. Let $\gamma_1=\angle u_1v_1v$, $\gamma_2=\angle vv_1 u_2$, $\gamma_3=\angle u_2v_1u_3$, and $\gamma_4=\angle u_3v_1u_1$, as in Figure~\ref{deg3-k5-fig}(d). We differentiate between two cases where (i) $\max\{\gamma_3,\gamma_4\}\geqslant 120^\circ$ and (ii) $\max\{\gamma_3,\gamma_4\}\leqslant 120^\circ$. 
	
	In case (i) we add $ru_1$, $u_2v_2$, $v_3v_4$ and remove $rv$, $v_1v$, $v'_3v$.	
	Observe that $\gamma_1+\gamma_2\leqslant 180^\circ$, and thus as in the previous case ($l=4$) both $\alpha_1+\gamma_1$ and $\alpha_2+\gamma_2$ are at most $180^\circ$. Thus by Theorem~\ref{two-angle-thr} both $|ru_1|$ and $|u_2v_2|$ are at most $\sqrt{3}$. 
	
	Consider case (ii). We already know that $\gamma_3+\gamma_4\geqslant 120^\circ$. By a reasoning similar to the one in case $l=2$ we have  $\alpha_1 + \gamma_1\leqslant 195^\circ$ or $\alpha_2 + \gamma_2\leqslant 195^\circ$. If $\alpha_1 + \gamma_1\leqslant 195^\circ$ then add $ru_1$, $u_2u_3$, $v_1v_2$, $v_3v_4$ and remove $rv$, $u'_2v_1$, $v_1v$, $v'_3v$ (this case is depicted in Figure~\ref{deg3-k5-fig}(d)); notice that $|u_2u_3|\leqslant \sqrt{3}$ by Lemma~\ref{length-lemma} and $|ru_1|\leqslant \sqrt{3}$ by Theorem~\ref{two-angle-thr}. If $\alpha_2 + \gamma_2\leqslant 195^\circ$ then add $rv_1$, $u_1u_3$, $u_2v_2$, $v_3v_4$ and remove $rv$, $u'_1v_1$, $v_1v$, $v'_3v$; notice that $|u_1u_3|\leqslant \sqrt{3}$ by Lemma~\ref{length-lemma} and $|u_2v_2|\leqslant \sqrt{3}$ by Theorem~\ref{two-angle-thr}.
\end{itemize}

Therefore, we obtain a new tree that satisfies the inductive hypothesis, and thus a ratio of $\sqrt{3}$ has been established. The above local replacements take constant time per root. Thus, given the initial degree-5 MST, the algorithm runs in linear time.

\paragraph{Remark:} Our analysis of the ratio $\sqrt{3}$ is tight under our inductive hypothesis that ``the root $r$ must have degree $1$ and $v$ must have degree at most $2$ in the new tree''; a set of four points formed by the center plus the vertices of an equilateral triangle, indicates why.

\section{Worst-case ratio for bottleneck degree-2 spanning trees}
\label{degree-2-section}

In this section we show that $\beta_2\geqslant \sqrt{7}$. Recall that any degree-2 spanning tree is a Hamiltonian path and vice versa. It is already known \cite{Caragiannis2008} that the worst-case ratio of the largest edge-length of the bottleneck Hamiltonian cycle to the largest edge-length of the BST is at least $\sqrt{7}$; see the example in Figure~\ref{lower-bounds-fig}(c). This small example, however, does not give a lower bound better than $2$ for $\beta_2$, i.e., for the bottleneck Hamiltonian path.

\begin{wrapfigure}{r}{3in} 
	\vspace{-12pt} 
	\centering
	\includegraphics[width=.45\columnwidth]{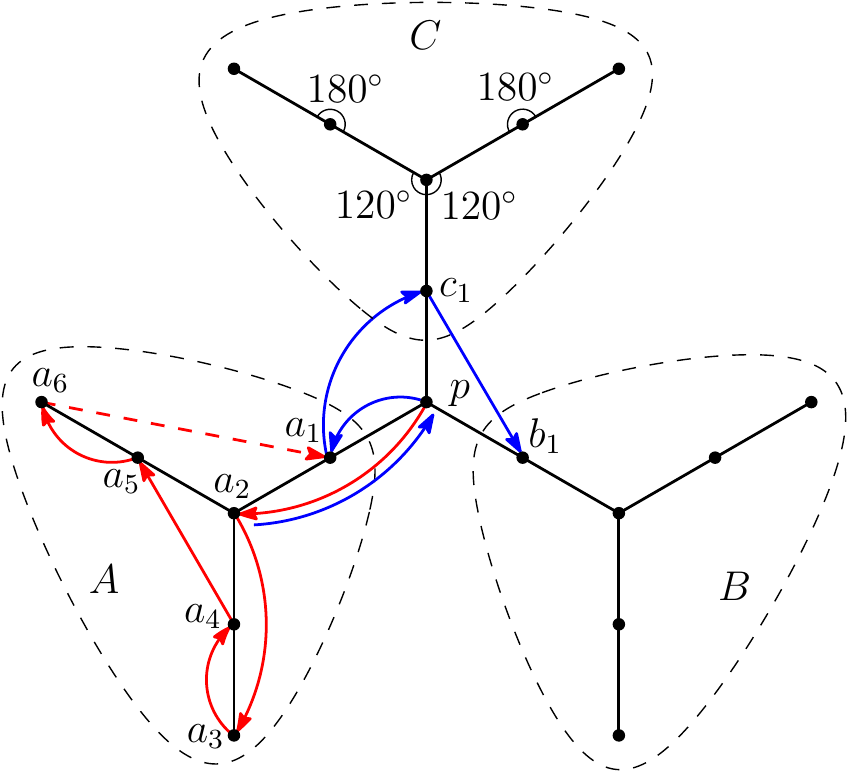}
	\vspace{-12pt} 
\end{wrapfigure}
The figure to the right exhibits a set of 19 points that achieves lower bound $\sqrt{7}$ for the bottleneck Hamiltonian path.  
The figure also shows the MST where every edge has length 1.
Every angle at each degree-3 vertex is $120^\circ$, and every angle at each degree-2 vertex is $180^\circ$. Except the point $p$, all other points are partitioned into three sets $A$, $B$, and $C$.
Consider any bottleneck Hamiltonian path $\delta$ on this point set. We prove that $\delta$ has a ``long edge'', i.e., an edge of length at least $\sqrt{7}$. This would immediately imply that $\beta_2\geqslant \sqrt{7}$. Due to the size of the point set which is fairly large (compared with the lower bound examples in Figure~\ref{lower-bounds-fig}) and our desire to provide a concrete argument, the proof is somewhat lengthy. 

The path $\delta$ has $18$ edges. Let $q$ be an endpoint of $\delta$ such that the number of path edges between $p$ and $q$ is at least $9$. Orient all edges of $\delta$ towards $q$. For any vertex $u\in \delta$, we denote its outgoing edge by $\overrightarrow{u}$.
The edge $\overrightarrow{p}$ goes into one of the three sets, say $A$. For the rest of our argument, we consider two cases depending on whether the incoming edge of $p$ comes from $B\cup C$ or from $A$. 

First assume that the incoming edge of $p$ comes form $B\cup C$.
If $\overrightarrow{p}$ goes to $a_1$ then there is a long edge between $A\setminus\{a_1\}$ and $B\cup C$ (recall the 9 edges following $p$). If $\overrightarrow{p}$ goes into $A\setminus\{a_1,a_2\}$ then $\overrightarrow{p}$ is long. Thus, assume that $\overrightarrow{p}$ goes to $a_2$, as depicted by a red edge in the figure. Now consider $\overrightarrow{a_2}$. If $\overrightarrow{a_2}$ goes into $B\cup C$ then it is long, and if it goes to $a_1$ then there is a long edge between $A\setminus\{a_1,a_2\}$ and $B\cup C$. Thus, assume that it goes to a vertex in $A\setminus\{a_1,a_2\}$; by symmetry assume that this vertex is $a_3$ or $a_4$. If it goes to $a_4$ then there would be a long edge incident to $a_3$. Assume that it goes to $a_3$ as in the figure. If $\overrightarrow{a_3}$ goes to any vertex other than $a_4$ then it is long, thus assume that it goes to $a_4$. Now consider $\overrightarrow{a_4}$. If $\overrightarrow{a_4}$ goes to $B\cup C\cup\{a_6\}$ then it is long, and if it goes to $a_1$ then there would be a long edge between $\{a_5,a_6\}$ and $B\cup C$. Thus, assume that it goes to $a_5$. If $\overrightarrow{a_5}$ goes to any vertex other than $a_6$ then there would be a long edge incident to $a_6$, thus assume that it goes to $a_6$. In this setting, the edge $\overrightarrow{a_6}$ is long. Notice that all edges $\overrightarrow{p}$, $\overrightarrow{a_2}$, $\overrightarrow{a_3}$, $\overrightarrow{a_4}$, $\overrightarrow{a_5}$, and $\overrightarrow{a_6}$ exist because there are at least 9 edges from $p$ to $q$.

Now assume that the incoming edge of $p$ comes form $A$. If this edge comes from $\{a_3,a_4,a_5,a_6\}$ then it is long. If this edge comes from $a_1$ then by an argument similar to the one above, we traverse the edges following $\overrightarrow{p}$ until we get a long edge (recall that $\overrightarrow{p}$ goes into $A$). Thus, assume that the incoming edge of $p$ comes from $a_2$, as depicted by a blue edge in the figure. Consider $\overrightarrow{p}$ again. If it goes to any point of $A$ other than $a_1$ then it is long. Assume that it goes to $a_1$, as in the figure. If $\overrightarrow{a_1}$ goes into the set $A\setminus\{a_1,a_2\}$ then there is a long edge between this set and $B\cup C$. Thus, assume that $\overrightarrow{a_1}$ goes into $B$ or $C$; by symmetry assume $C$. At this point notice that any edge between $A\setminus\{a_1,a_2\}$ and $B\cup C$ is long, and thus we may assume there is no edge between these two sets. If $\overrightarrow{a_1}$ goes to any point of $C$ other than $c_1$ then it is long, and thus assume that it goes to $c_1$. 
If $\overrightarrow{c_1}$ goes into the set $C\setminus\{c_1\}$ then there is a long edge between this set and $B$ (considering the 9 edges following $p$). If it goes into $B\setminus \{b_1\}$ then it is long. Thus, assume that it goes to $b_1$, as in the figure. In this setting there must be an edge between $B$ and $C\setminus\{c_1\}$ (even if the next edges of the path capture all remaining points of $B$, the 9th edge has to leave $B$); any such edge is long. This is the end of the proof.   

\section{Proof of Theorem~\ref{two-angle-thr}}
\label{proof-section}

In this section we prove Theorem~\ref{two-angle-thr} that: {\em Let $pu$, $uv$, and $vq$ be three MST edges such that both $p$ and $q$ lie on the same side of the line through $uv$. Let $\alpha$ and $\gamma$ denote the convex angles at $u$ and $v$. If $\alpha+\gamma\leqslant 210^\circ$, then $|pq|\leqslant \sqrt{3}\cdot\max\{|pu|,|uv|,|vq|\}$.}

This theorem states a maximization problem with five variables $|pu|$, $|uv|$, $|vq|$, $\alpha$, and $\gamma$. We use a sequence of geometric transformations to discretize the problem, reduce the number of variables, and simplify the proof. To do so we use Lemma~\ref{moving-point-lemma} and the following result of Abu-Affash~\etal~\cite{Abu-Affash2015}. 

\begin{lemma}[Abu-Affash~\etal~\cite{Abu-Affash2015}]
	\label{Abu-Affash-lemma}
	If $pu$ and $uv$ are two adjacent MST edges, then the triangle with vertices $p$, $u$, and $v$ has no other vertex of the MST in its interior or on its boundary.
\end{lemma}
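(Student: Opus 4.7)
The plan is to reduce the lemma to the fundamental \emph{lune property} of MST edges: for any MST edge $xy$, the open lune $\mathrm{int}(L(x,y)) := \{z : |zx|<|xy| \text{ and } |zy|<|xy|\}$ contains no other MST vertex. This follows immediately from the cut property of MSTs: if such a $z$ existed, then removing $xy$ from the MST would yield a two-way cut, and whichever side $z$ lies on produces an edge ($zx$ or $zy$) that crosses this cut with weight strictly less than $|xy|$, contradicting the minimality of $xy$ across the cut.

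I would assume for contradiction that there is an MST vertex $w\in \triangle puv$ distinct from $p,u,v$, and aim to show $w\in\mathrm{int}(L(p,u))\cup \mathrm{int}(L(u,v))$. If $w$ lies on the open segment $pu$ then $|wp|+|wu|=|pu|$ with both positive, so $w\in\mathrm{int}(L(p,u))$; the case $w\in$ open $uv$ is symmetric. This disposes of two of the three edges and leaves the case where $w$ lies either strictly inside $\triangle puv$ or on the open segment $pv$.

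For the remaining case I would look at the three angles $\angle pwu$, $\angle uwv$, $\angle pwv$ at $w$. If $w$ is strictly inside the triangle, these are positive, each strictly less than $180^\circ$, and sum to $360^\circ$; if instead $w$ lies on open $pv$, then $\angle pwv=180^\circ$ and $\angle pwu+\angle uwv=180^\circ$. In either case $\angle pwv\le 180^\circ$, so $\angle pwu+\angle uwv\ge 180^\circ$, forcing at least one of $\angle pwu,\angle uwv$ to be at least $90^\circ$. Say without loss of generality $\angle pwu\ge 90^\circ$; then $\cos(\angle pwu)\le 0$ and the law of cosines in $\triangle pwu$ gives
\[
|pu|^2 \;=\; |wp|^2+|wu|^2-2|wp||wu|\cos(\angle pwu) \;\ge\; |wp|^2+|wu|^2 \;>\; \max\{|wp|^2,|wu|^2\},
\]
where the last strict inequality uses $|wp|,|wu|>0$ since $w\ne p,u$. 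Hence $|wp|<|pu|$ and $|wu|<|pu|$, placing $w$ in $\mathrm{int}(L(p,u))$ and contradicting the lune property of the MST edge $pu$.

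The delicate point is choosing the right angular threshold. A naive pigeonhole on three angles summing to $360^\circ$ yields only an angle of at least $120^\circ$, and if that angle happens to be $\angle pwv$ one is stuck because $pv$ is not assumed to be an MST edge. The sharper observation that $\angle pwv\le 180^\circ$ (so the other two angles already sum to at least $180^\circ$) is exactly what pushes a \emph{usable} pair past the $90^\circ$ threshold required for the law-of-cosines lune inclusion. I expect this to be the main and only nontrivial step; notably the proof uses nothing about the angle $\angle puv$ at $u$, only that $pu$ and $uv$ are individually MST edges.
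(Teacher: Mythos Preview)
Your argument is correct. The reduction to the lune property is sound, and the key observation that $\angle pwu+\angle uwv\geqslant 180^\circ$ (because the third sector $\angle pwv$ is at most $180^\circ$) cleanly forces one of the two usable angles to be at least $90^\circ$, after which the law of cosines gives the strict lune inclusion. The edge cases (open $pu$, open $uv$, open $pv$) are handled, and the strict inequalities needed for the \emph{open} lune follow from $w\neq p,u,v$.

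There is nothing to compare against in the paper itself: Lemma~\ref{Abu-Affash-lemma} is quoted from Abu-Affash~\etal\ and is not proved here. Your write-up therefore supplies a complete, elementary proof where the paper only cites one. Note, incidentally, that your proof uses nothing about the angle at $u$ (no $60^\circ$ lower bound), so the triangle may even be degenerate; the statement holds for any two MST edges sharing an endpoint.
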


\begin{lemma}
	\label{moving-point-lemma}
	Let $p$ and $v$ be two distinct points in the plane and let $R$ be a ray emanating from $v$ that is not passing through $p$. Let $c_1$ and $c_2$ be two constants such that $c_2>c_1>0$. Then the largest value of $|pq|/\max\{|vq|,c_2\}$ over all points $q$ on $R$ with $|pq|>|vq|\geqslant c_1$, is achieved when $|vq|=c_1$ or $|vq|=c_2$. 
\end{lemma}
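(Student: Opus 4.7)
The plan is to parametrize a point $q$ on the ray $R$ by its distance $t = |vq|$ from $v$, set $d = |vp|$, and let $\theta \in (0, \pi)$ be the angle between $\overrightarrow{vp}$ and $R$ (nonzero because $R$ avoids $p$). The law of cosines then gives $|pq|^2 = t^2 - 2dt\cos\theta + d^2$, so the objective $g(t) := |pq|/\max\{t, c_2\}$ is a piecewise-smooth function of the single variable $t \in [c_1, \infty)$. I would then analyze $g$ separately on the two pieces $[c_1, c_2]$ and $[c_2, \infty)$, and argue that on each piece the maximum is attained at an endpoint; this immediately pins the global maximum to $\{c_1, c_2\}$.

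On $[c_1, c_2]$ the denominator equals the constant $c_2$, so $g$ is a positive scalar multiple of $|pq|$. Because $|pq|^2$ is a convex upward parabola in $t$, the function $|pq|$ attains its maximum over any closed interval at one of the endpoints, so the maximum of $g$ on $[c_1, c_2]$ is at $t = c_1$ or $t = c_2$. On $[c_2, \infty)$ I would work with the simpler expression
$$g(t)^2 \;=\; 1 \;-\; \frac{2d\cos\theta}{t} \;+\; \frac{d^2}{t^2}, \qquad (g^2)'(t) \;=\; \frac{2d}{t^3}\bigl(t\cos\theta - d\bigr),$$
and compare this with the feasibility constraint $|pq| > |vq|$, which is equivalent to $g(t)^2 > 1$ and simplifies to $t\cos\theta < d/2$. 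Since $d/2 < d$, this strict inequality forces $(g^2)'(t) < 0$ throughout the feasible portion of $[c_2, \infty)$. Hence $g$ is strictly decreasing there, and its maximum is attained at $t = c_2$.

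The main obstacle, and the only subtle point, is coordinating the feasibility condition with the monotonicity analysis on $[c_2, \infty)$: a priori $g^2$ is a quadratic in $1/t$ that first decreases and then increases as $t$ grows, so $g$ is not monotone on the whole ray. The key observation is that the constraint $|pq| > |vq|$, which looks unrelated to monotonicity, places $t$ strictly inside the decreasing branch of $g^2$. Once this is noticed, combining the two endpoint arguments from the two pieces yields the lemma.
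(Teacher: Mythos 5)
Your proof is correct and follows essentially the same route as the paper: split at $t=c_2$, use an endpoint argument on $[c_1,c_2]$ (your convexity of the parabola $|pq|^2$ is the paper's ``the distance to a segment is maximized at its endpoints''), and show the ratio is decreasing on the feasible part of $[c_2,\infty)$. The only difference is that the paper justifies the monotonicity on $[c_2,\infty)$ by a one-line perturbation/triangle-inequality remark (``since $|pq|>|vq|$, decreasing $|vq|$ increases $|pq|/|vq|$''), whereas you make the same point explicit by computing $(g^2)'$ and observing that the constraint $|pq|>|vq|$, i.e.\ $t\cos\theta<d/2$, forces the derivative to be negative.
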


\begin{wrapfigure}{r}{1.5in} 
	\centering
	\vspace{2pt} 
	\includegraphics[width=1.4in]{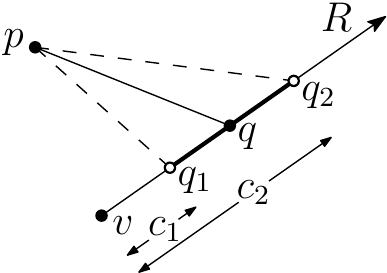}
	\vspace{-12pt} 
\end{wrapfigure}
\noindent{\em Proof. }Let $q_1$ and $q_2$ be the two points on $R$ such that $|vq_1|=c_1$ and $|vq_2|=c_2$. Let $q$ be any point on $R$ with $|pq|>|vq|\geqslant c_1$. We consider two cases where $|vq|\leqslant c_2$ or $|vq|\geqslant c_2$. 
	
First assume that $|vq|\leqslant c_2$, that is, $q$ lies on the segment $q_1q_2$ as in the figure to the right. Then $\max\{|vq|,c_2\}=c_2$, and thus we want the largest value of $|pq|/c_2$. Since the largest distance between a point and a segment is achieved at segment endpoints, the largest value of $|pq|$ is achieved when $q$ is an endpoint of $q_1q_2$. Therefore, the largest value of $|pq|/c$ is achieved when $q=q_1$ or $q=q_2$, that is, when $|vq|=c_1$ or $|vq|=c_2$.
	
Now assume that $|vq|\geqslant c_2$, i.e., $q$ does not lie on $vq_2$. Then $\max\{|vq|,c_2\}=|vq|$, and thus we want the largest value of $|pq|/|vq|$. Since $|pq|>|vq|$, if we decrease $|vq|$ (by moving $q$ towards $q_2$) then $|pq|/|vq|$ increases. Thus, the largest value of $|pq|/|vq|$ is achieved when $|vq|$ is as small as possible, i.e., when $|vq|=c_2$.
\qed
\vspace{8pt}

Now we have adequate tools for our proof of Theorem~\ref{two-angle-thr}. 
	Without loss of generality assume that $\alpha$ is the convex angle at $u$, $\gamma$ is the convex angle at $v$, and $\alpha\leqslant \gamma$. Thus, $60^\circ \leqslant\alpha\leqslant 105^\circ$. After a suitable rotation and/or reflection assume that $uv$ is horizontal, $u$ is to the left of $v$, and both $p$ and $q$ lie above the line through $uv$; see Figure~\ref{Angelini-fig}(b). 
	
	We want to prove that $\sqrt{3}$ is an upper bound on the ratio $|pq|/\max\{|pu|,|uv|,|vq|\}$. 
	We assume that $|pq|> \max\{|pu|,|uv|,|vq|\}$ because otherwise the claim is trivial.	
	To simplify the proof we apply a sequence of geometric transformations that might increase the ratio, but wont decrease it.
	It is implied by Lemma~\ref{Abu-Affash-lemma} that $q$ is outside the triangle $\bigtriangleup puv$. 
	This and the fact that MST is non-crossing, imply that $q$ is to the right side of the ray emanating from $v$ and passing through $p$. Thus, if we rotate $q$ clockwise around $v$ while maintaining the distance $|vq|$, then the angle $\angle pvq$ increases and so does the length $|pq|$; this would increase the objective ratio. Therefore, without loss of generality, we assume that $\gamma$ is as large as possible, i.e., $\gamma=210^\circ-\alpha$.
	
	Since $|pq|>|vq|\geqslant 0$, by Lemma~\ref{moving-point-lemma} we can assume that $|vq| = 0$ or $|vq|=\max\{|pu|,|uv|\}$, where $0$ and $\max\{|pu|,|uv|\}$ play roles of the constants $c_1$ and $c_2$ in this lemma.
	First assume that $|vq|=0$, i.e., $q=v$. Lemma~\ref{length-lemma} implies that $|pq|\leqslant 2\sin(\alpha/2)\cdot \max\{|pu|,|uv|\}$. Since $\sin(\alpha/2)$ is increasing on the interval $\alpha\in[60^\circ,105^\circ]$, its largest value is achieved at $\alpha=105^\circ$. Thus $2\sin(\alpha/2)\leqslant 2\sin(52.5^\circ) < \sqrt{3}$, and our claim follows.

	\begin{figure}[htb]
		\centering
		\setlength{\tabcolsep}{0in}
		$\begin{tabular}{ccc}
		\multicolumn{1}{m{.33\columnwidth}}{\centering\includegraphics[width=.33\columnwidth]{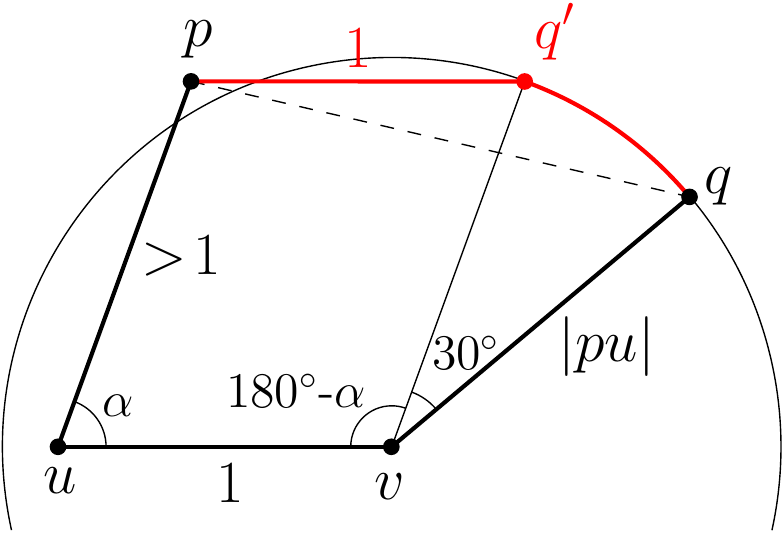}}
		&\multicolumn{1}{m{.33\columnwidth}}{\centering\includegraphics[width=.26\columnwidth]{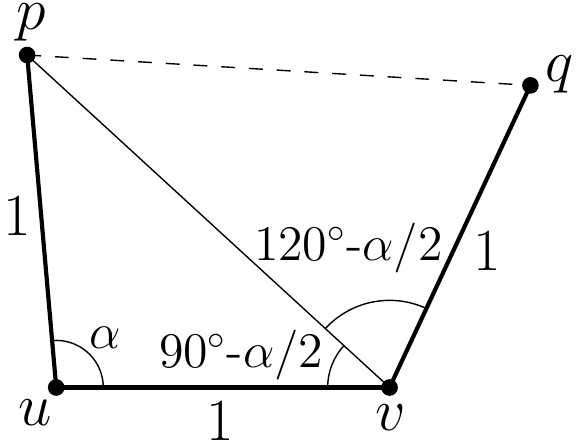}}
		&\multicolumn{1}{m{.33\columnwidth}}{\centering\includegraphics[width=.32\columnwidth]{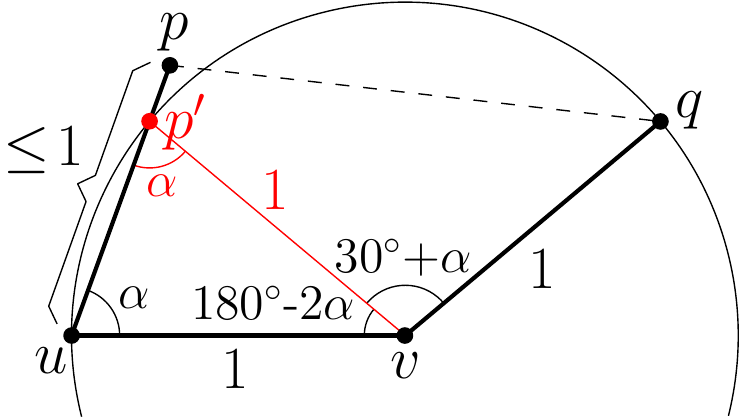}}\\
		(a) \text{$|pu| >1$ and $|vq|=|pu|$} & (b) \text{$|pu|= 1$ and $\alpha\geqslant 90^\circ$}&(c) \text{$|pu|\leqslant 1$ and $\alpha< 90^\circ$} 
		\end{tabular}$
		\caption{Illustration of the proof of Theorem~\ref{two-angle-thr}.}
		\label{x-larger-fig}
	\end{figure}
	
	Now assume that $|vq|=\max\{|pu|,|uv|\}$. After a suitable rescaling assume that $|uv|=1$. We consider two cases: $|pu|>1$ and $|pu|\leqslant 1$.
	
	\begin{itemize}
		\item $|pu|>1$. In this case $|vq|=\max\{|pu|,1\}=|pu|$. Consider the circle of radius $|pu|$ that is centered at $v$, as in Figure~\ref{x-larger-fig}(a); this circle contains $q$. Let $q'$ be the point obtaining by moving $q$ on this circle counterclockwise  for $30$ degrees. Then $|vq'|=|pu|$ and $vq'$ is parallel to $pu$ (because $\alpha+ \angle uvq'=180^\circ$). Thus $p,u,v,q'$ are vertices of a parallelogram, and hence $|pq'|=1$. The length of $pq$ is at most $|pq'|$ plus the length of the arc $\widehat{q'q}$ which is $|pu|\cdot\pi/6$. In other words,
		$|pq|\leqslant 1+ |pu|\cdot\pi/6< |pu|+|pu|\cdot\pi/6< \sqrt{3}\cdot |pu|$. Thus our claim follows.
		
		\item $|pu|\leqslant 1$. Then $|vq|=\max\{|pu|,1\}=1$. We consider two subcases where $\alpha\geqslant 90^\circ$ or $\alpha< 90^\circ$.
		
		\begin{itemize}
			\item $\alpha \geqslant 90^\circ$. 
			Since $|pq|>|pu|\geqslant 0$, by Lemma~\ref{moving-point-lemma} we can assume that $|pu|=0$ or $|pu| = \max\{|uv|,|vq|\}=1$ (to apply the lemma notice that $u$, $0$, $\max\{|uv|,|vq|\}$ play roles of $v$, $c_1$, $c_2$, respectively, and the roles of $p$ and $q$ are swapped).
			First assume that $|pu|=0$, i.e., $p=u$. Since $\bigtriangleup uvq$ is isosceles with vertex angle $210^\circ -\alpha$, we get $|pq|= 2\sin(105^\circ -\alpha/2)$. This function is decreasing on the interval $\alpha\in[90^\circ, 105^\circ]$, and its largest value $\sqrt{3}$ is attained at $\alpha=90^\circ$. Thus our claim follows. Now assume that $|pu|=1$, as in Figure~\ref{x-larger-fig}(b). Then $|pv|=2\sin (\alpha/2)$ because $\bigtriangleup puv$ is isosceles. By the law of cosines (on $\bigtriangleup pvq$) we have
			\[
			|pq|=\sqrt{1+4\sin^2(\alpha/2)-4\sin(\alpha/2)\cos(120^\circ-\alpha/2)},
			\]
			which is increasing on the interval $\alpha\in[90^\circ, 105^\circ]$ and its largest value $1+\sqrt{2-\sqrt{3}}\approx 1.518$ is attained at $\alpha=105^\circ$. 
			
			\item $\alpha< 90^\circ$. Notice that the points $p$, $u$, and $v$ have not been moved by above transformations, and  thus they are still vertices of the MST. By minimality of the MST, $p$ does not lie in the interior of the circle with radius 1 that is centered at $v$; see Figure~\ref{x-larger-fig}(c). This and our assumption $\alpha< 90^\circ$ imply that $pu$ intersects this circle at a point other than $u$; let $p'$ denote this intersection point. Thus, we have $|pu|\geqslant |p'u|= 2\sin(90^\circ-\alpha)$; the equality holds because $\bigtriangleup p'uv$ is isosceles with vertex angle $180^\circ-2\alpha$.

			Since $|pq|>|pu|\geqslant 2\sin(90^\circ-\alpha)$, by Lemma~\ref{moving-point-lemma} we can assume that $|pu|=2\sin(90^\circ-\alpha)$ or $|pu| = \max\{|uv|,|vq|\}=1$ (notice that $2\sin(90^\circ-\alpha)$ and $\max\{|uv|,|vq|\}$ play roles of $c_1$ and $c_2$, respectively). If $|pu|=2\sin(90^\circ-\alpha)$, i.e., $p=p'$,  then $\bigtriangleup pvq$ is isosceles with vertex angle $30^\circ+\alpha$; see Figure~\ref{x-larger-fig}(c). In this case we get $|pq|=2\sin(15^\circ+\alpha/2)$, which is at most $\sqrt{3}$ on the interval $\alpha\in [60^\circ,90^\circ)$. Assume that $|pu|=1$. Then $|pv|=2\sin(\alpha/2)$ because $\bigtriangleup puv$ is isosceles. By the law of cosines (on $\bigtriangleup pvq$) we have 
			\[|pq|=\sqrt{1+4\sin^2(\alpha/2)-4\sin(\alpha/2)\cos(120^\circ-\alpha/2)},
			\]
			which is increasing on $\alpha\in[60^\circ, 90^\circ)$ with largest value at most $\sqrt{4-\sqrt{3}}\approx 1.506$. \qedhere
		\end{itemize}
	\end{itemize}

\section{Conclusions}
A natural open problem is to improve the upper bounds on $\beta_2$, $\beta_3$, $\beta_4$ further by designing better algorithms. Specifically, it is natural to investigate whether the geometry
of the Euclidean plane (besides the triangle inequality) can be exploited to develop an approximation algorithm for the bottleneck degree-2 spanning tree problem (i.e., the bottleneck Hamiltonian path problem) with factor less than $3$. We note the existence of a 2-approximation algorithm \cite{Parker1984} for the bottleneck Hamiltonian cycle problem in general metric spaces; this algorithm is based on the Hamiltonicity of the square of every biconnected graph~\cite{Fleischner1974}.

The study of worst-case ratios in higher dimensions is more vital as the maximum degree of an MST and a BST can be much larger. Zbarsky \cite{Zbarsky2014} showed that $1.447\leqslant \tau_3\leqslant 1.559$ in the $d$-dimensional Euclidean space for any $d\geqslant 2$. Andersen and Ras\cite{Andersen2108b} studied the bottleneck version of the problem in dimension 3.  
 
\paragraph{Acknowledgement.}
I thank Jean-Lou De Carufel for helpful suggestions on simplifying the proof of Lemma~\ref{moving-point-lemma}.

\bibliographystyle{abbrv}
\bibliography{Bounded-Degree-BST}

\begin{thebibliography}{10}

\bibitem{Abu-Affash2015}
A.~K. Abu{-}Affash, A.~Biniaz, P.~Carmi, A.~Maheshwari, and M.~H.~M. Smid.
\newblock Approximating the bottleneck plane perfect matching of a point set.
\newblock {\em Computational Geometry: Theory and Applications},
  48(9):718--731, 2015.

\bibitem{Andersen2016}
P.~J. Andersen and C.~J. Ras.
\newblock Minimum bottleneck spanning trees with degree bounds.
\newblock {\em Networks}, 68(4):302--314, 2016.

\bibitem{Andersen2018a}
P.~J. Andersen and C.~J. Ras.
\newblock Algorithms for {E}uclidean degree bounded spanning tree problems.
\newblock arXiv:1809.09348, 2018.

\bibitem{Andersen2108b}
P.~J. Andersen and C.~J. Ras.
\newblock Degree bounded bottleneck spanning trees in three dimensions.
\newblock arXiv:1812.11177, 2018.

\bibitem{Angelini2014}
P.~Angelini, T.~Bruckdorfer, M.~Chiesa, F.~Frati, M.~Kaufmann, and
  C.~Squarcella.
\newblock On the area requirements of {E}uclidean minimum spanning trees.
\newblock {\em Computational Geometry: Theory and Applications},
  47(2):200--213, 2014.
\newblock Also in {\em WADS} 2011.

\bibitem{Arkin2009}
E.~M. Arkin, S.~P. Fekete, K.~Islam, H.~Meijer, J.~S.~B. Mitchell, Y.~N.
  Rodr{\'{\i}}guez, V.~Polishchuk, D.~Rappaport, and H.~Xiao.
\newblock Not being (super)thin or solid is hard: {A} study of grid
  {H}amiltonicity.
\newblock {\em Computational Geometry: Theory and Applications},
  42(6-7):582--605, 2009.

\bibitem{Camerini1978}
P.~M. Camerini.
\newblock The min-max spanning tree problem and some extensions.
\newblock {\em Information Processing Letters}, 7(1):10--14, 1978.

\bibitem{Caragiannis2008}
I.~Caragiannis, C.~Kaklamanis, E.~Kranakis, D.~Krizanc, and A.~Wiese.
\newblock Communication in wireless networks with directional antennas.
\newblock In {\em Proceedings of the 20th Annual {ACM} Symposium on Parallelism
  in Algorithms and Architectures $(${SPAA}$)$}, pages 344--351, 2008.

\bibitem{Chan2004}
T.~M. Chan.
\newblock {E}uclidean bounded-degree spanning tree ratios.
\newblock {\em Discrete {\&} Computational Geometry}, 32(2):177--194, 2004.
\newblock Also in {\em SoCG} 2003.

\bibitem{Cormen1990}
T.~H. Cormen, C.~E. Leiserson, and R.~L. Rivest.
\newblock {\em Introduction to Algorithms}.
\newblock McGraw-Hill, Cambridge, 1989.

\bibitem{Dobrev2012a}
S.~Dobrev, E.~Kranakis, D.~Krizanc, J.~Opatrny, O.~M. Ponce, and L.~Stacho.
\newblock Strong connectivity in sensor networks with given number of
  directional antennae of bounded angle.
\newblock {\em Discrete Mathematics, Algorithms and Applications}, 4(3), 2012.
\newblock Also in {\em COCOA} 2010.

\bibitem{Dobrev2012b}
S.~Dobrev, E.~Kranakis, O.~M. Ponce, and M.~Plz{\'{\i}}k.
\newblock Robust sensor range for constructing strongly connected spanning
  digraphs in {UDG}s.
\newblock In {\em Proceedings of the 7th International Computer Science
  Symposium in Russia $($CSR$)$}, pages 112--124, 2012.

\bibitem{Fekete1997}
S.~P. Fekete, S.~Khuller, M.~Klemmstein, B.~Raghavachari, and N.~E. Young.
\newblock A network-flow technique for finding low-weight bounded-degree
  spanning trees.
\newblock {\em Journal of Algorithms}, 24(2):310--324, 1997.
\newblock Also in {\em IPCO} 1996.

\bibitem{Fleischner1974}
H.~Fleischner.
\newblock The square of every two-connected graph is {H}amiltonian.
\newblock {\em Journal of Combinatorial Theory, Series {B}}, 16(1):29--34,
  1974.

\bibitem{Francke2009}
A.~Francke and M.~Hoffmann.
\newblock The {E}uclidean degree-4 minimum spanning tree problem is {NP}-hard.
\newblock In {\em Proceedings of the 25th {ACM} Symposium on Computational
  Geometry $($SoCG$)$}, pages 179--188, 2009.

\bibitem{Itai1982}
A.~Itai, C.~H. Papadimitriou, and J.~L. Szwarcfiter.
\newblock Hamilton paths in grid graphs.
\newblock {\em SIAM Journal on Computing}, 11(4):676--686, 1982.

\bibitem{Jothi2009}
R.~Jothi and B.~Raghavachari.
\newblock Degree-bounded minimum spanning trees.
\newblock {\em Discrete Applied Mathematics}, 157(5):960--970, 2009.

\bibitem{Karaganis1968}
J.~Karaganis.
\newblock On the cube of a graph.
\newblock {\em Canadian Mathematical Bulletin}, 11(2):295--296, 1968.

\bibitem{Khuller1996}
S.~Khuller, B.~Raghavachari, and N.~E. Young.
\newblock Low-degree spanning trees of small weight.
\newblock {\em SIAM Journal on Computing}, 25(2):355--368, 1996.
\newblock Also in {\em STOC} 1994.

\bibitem{Lesniak1973}
L.~Lesniak.
\newblock Graphs with 1-{H}amiltonian-connected cubes.
\newblock {\em Journal of Combinatorial Theory, Series B}, 14(2):148--152,
  1973.

\bibitem{Monma1992}
C.~L. Monma and S.~Suri.
\newblock Transitions in geometric minimum spanning trees.
\newblock {\em Discrete {\&} Computational Geometry}, 8:265--293, 1992.
\newblock Also in {\em SoCG} 1991.

\bibitem{Papadimitriou1977}
C.~H. Papadimitriou.
\newblock The {E}uclidean traveling salesman problem is {NP}-complete.
\newblock {\em Theoretical Computer Science}, 4(3):237--244, 1977.

\bibitem{Papadimitriou1984}
C.~H. Papadimitriou and U.~V. Vazirani.
\newblock On two geometric problems related to the traveling salesman problem.
\newblock {\em Jounal of Algorithms}, 5(2):231--246, 1984.

\bibitem{Parker1984}
R.~G. Parker and R.~L. Rardin.
\newblock Guaranteed performance heuristics for the bottleneck travelling
  salesman problem.
\newblock {\em Operation Research Letters}, 2(6):269--272, 1984.

\bibitem{Ravi1993}
R.~Ravi, M.~V. Marathe, S.~S. Ravi, D.~J. Rosenkrantz, and H.~B. {Hunt III}.
\newblock Many birds with one stone: multi-objective approximation algorithms.
\newblock In {\em Proceedings of the 25th Annual {ACM} Symposium on Theory of
  Computing $(${STOC}$)$}, pages 438--447, 1993.

\bibitem{Zbarsky2014}
S.~Zbarsky.
\newblock On improved bounds for bounded degree spanning trees for points in
  arbitrary dimension.
\newblock {\em Discrete {\&} Computational Geometry}, 51(2):427--437, 2014.

\end{thebibliography}
\end{document}